\documentclass{article}
\usepackage[margin=1in]{geometry}
\usepackage{authblk}
\RequirePackage{amsthm,amsmath,amsfonts,amssymb}

\RequirePackage[numbers]{natbib}
\RequirePackage[colorlinks,citecolor=blue,urlcolor=blue]{hyperref}
\RequirePackage{graphicx}
\usepackage{subcaption}
\usepackage{tikz}
\usepackage{comment}
\usepackage{amsmath}  
\usepackage{amsfonts} 
\usepackage{multirow} 
\usepackage{array}    

\numberwithin{equation}{section}

\newtheorem{theorem}{Theorem}

\newtheorem{proposition}[theorem]{Proposition}
\newtheorem{corollary}{Corollary}

\theoremstyle{remark}
\newtheorem{remark}{Remark}





\title{ZAGREB CONNECTION INDICES ON POLYOMINO CHAINS AND RANDOM POLYOMINO CHAINS}

\author[1]{Saylé Sigarreta}
\author[1]{Hugo Cruz-Suárez}

\affil[1]{BUAP, Departamento de Probabilidad y Estadística,
México.}

\begin{document}

\maketitle

\begin{abstract}
In this manuscript, we delve into the exploration of the first and second Zagreb connection indices of both polyomino chains and random polyomino chains. Our methodology relies on the utilization of Markov chain theory. Within this framework, the article thoroughly examines precise formulas and investigates extreme values. Leveraging the derived formulas, we further explore and elucidate the long-term behavior exhibited by random polyomino chains.
\end{abstract}

\textbf{Keywords}: mathematical chemistry, topological indices, Markov chain, random graphs, polyomino chains

\textbf{MSC classes}: 05C50, 05C80

\maketitle
\section{Introduction}

\noindent

All graphs considered in this paper are finite, simple, and connected unless otherwise specified. A graph, denoted as $G=(V,E)$, consists of sets where the elements of $V$ represent the nodes of the graph, and the elements of $E$ represent its edges. Undoubtedly, concepts from graph theory hold vast potential for various applications. For example, a chemical graph serves as a model for representing chemical systems: atoms are represented as nodes of the graph, while molecular bonds are depicted as edges. In this context, topological indices serve to quantify the structural information inherent in the graph, regardless of the specific numbering assigned to nodes and edges. In 1947, Wiener pioneered the use of topological indices to investigate certain physico-chemical properties of alkanes, thereby laying the foundation for chemical graph theory  \cite{bd20}.  Since then, numerous topological indices have been introduced and extensively studied to enhance our comprehension of molecular structure \cite{i1,bd21,v20}. Nowadays, the characterization of molecular structure through topological indices is a fundamental aspect of chemical graph theory. Consequently, this theory plays a pivotal role in designing molecular structures tailored to exhibit desired physico-chemical or biological properties. 

Particularly, the first and second Zagreb indices of a graph were introduced by chemists Gutman, Trinajsti\'c and Rusci\'c \cite{bd3,bd8}. Subsequently, indicating their significance, numerous researchers have further explored the concepts initially presented in these seminal papers. For instance, in a recent publication \cite{bd9}, the authors introduced graph invariants known as the first and second leap Zagreb indices, also referred to as Zagreb connection indices. These indices are derived from the concept of the second degree of a vertex $v$, defined as follows:

$$ Z C_1(G) =\sum_{v \in V(G)} \tau_{v}^2,$$
$$Z C_2(G) =\sum_{uv \in E(G)} \tau_{u} \tau_{v},$$
here, $\tau_v$ denotes the count of a vertex's second neighbors, encompassing all vertices in $G$ situated at a distance of two from $v$. Additionally, the authors investigated several general mathematical properties associated with these indices and their connection with other pertinent quantities within chemical graph theory was elucidated in \cite{bd10}. 

On the other hand, in the realm of random chain analysis, topological indices have emerged as a dynamic research area over the past two decades \cite{bd22}. Numerous topological indices have been scrutinized across various types of random chains, including random cyclooctane chains \cite{bd23,bd24}, random polyphenyl chains \cite{bd25,bd26}, random phenylene chains \cite{bd27,bd28}, random spiro chains \cite{bd29,bd30,bd41}, random hexagonal chains \cite{bd31,bd32,bd13}, and random polyomino chains \cite{bd14,bd35}. Furthermore, it is noteworthy that the Zagreb connection indices have been investigated within the context of random cyclooctatetraene chains, random polyphenyl chains, and triangular chain structures \cite{bd11, bd12}, representing an intersection of these aforementioned research endeavors.

\noindent

Specifically, polyomino systems, characterized by configurations of squares connected edge-to-edge, hold significant historical importance \cite{bd37,bd38} and have found widespread applications across various scientific disciplines. In Chemistry, their utility lies in their capacity to represent and analyze intricate chemical structures, including polymers, crystal structures, and specific organic molecules \cite{bd35,bd34,bd30,v27}. Notably, within this system, the polyomino chain stands out, where adjacent regular cells form a path by connecting their centers.

Within this framework, a square is classified as "terminal" if it has only one adjacent square, "medial" if it has two adjacent squares without any vertex of degree 2, and "kink" if it has two adjacent squares with a vertex of degree 2. A segment $s$ is defined as a maximum linear chain along with a kink and/or terminal square. The count of squares within a segment is denoted by $l(S)$ and indicates the length of the segment. By the way, a linear chain $Li_{n}$ comprises precisely one segment, while a zig-zag chain $Z_n$ consists of polyomino chains where all segments have lengths of 2 (see Figure \ref{f0}).

 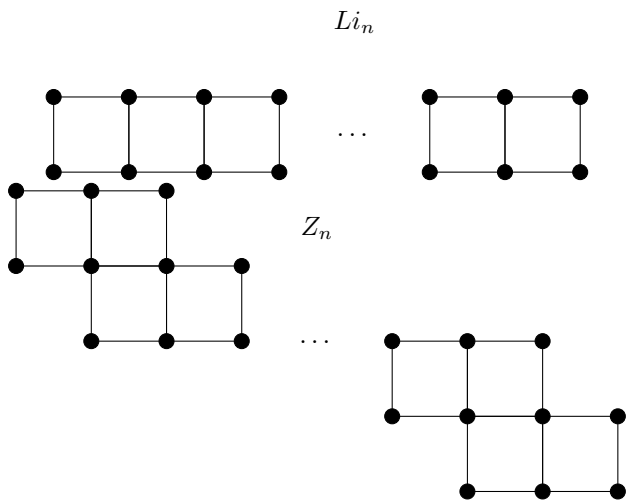
\begin{figure}
\centering

\begin{tikzpicture}


\draw (0,0) rectangle (1,1);

\draw (1,0) rectangle (2,1);

\draw (2,0) rectangle (3,1);

\node[fill=black, draw, circle, inner sep=2pt] (n1) at (0,0) {};
\node[fill=black, draw, circle, inner sep=2pt] (n2) at (1,0) {};
\node[fill=black, draw, circle, inner sep=2pt] (n3) at (0,1) {};
\node[fill=black, draw, circle, inner sep=2pt] (n4) at (1,1) {};
\node[fill=black, draw, circle, inner sep=2pt] (n5) at (2,0) {};
\node[fill=black, draw, circle, inner sep=2pt] (n6) at (3,0) {};
\node[fill=black, draw, circle, inner sep=2pt] (n7) at (2,1) {};
\node[fill=black, draw, circle, inner sep=2pt] (n8) at (3,1) {};

\node at (4,0.5) {$\dots$};
\node at (4,2) {$Li_n$};
\draw (5,0) rectangle (6,1);
\draw (6,0) rectangle (7,1);

\node[fill=black, draw, circle, inner sep=2pt] (n9) at (5,0) {};
\node[fill=black, draw, circle, inner sep=2pt] (n10) at (6,0) {};
\node[fill=black, draw, circle, inner sep=2pt] (n11) at (7,0) {};
\node[fill=black, draw, circle, inner sep=2pt] (n12) at (5,1) {};
\node[fill=black, draw, circle, inner sep=2pt] (n13) at (6,1) {};
\node[fill=black, draw, circle, inner sep=2pt] (n14) at (7,1) {};

\end{tikzpicture}
\hspace{1cm} 
\begin{tikzpicture}

\draw (0,0) rectangle (1,1);

\draw (1,0) rectangle (2,1);

\node[fill=black, draw, circle, inner sep=2pt] (n1) at (0,0) {};
\node[fill=black, draw, circle, inner sep=2pt] (n2) at (1,0) {};
\node[fill=black, draw, circle, inner sep=2pt] (n3) at (0,1) {};
\node[fill=black, draw, circle, inner sep=2pt] (n4) at (1,1) {};
\node[fill=black, draw, circle, inner sep=2pt] (n5) at (2,0) {};
\node[fill=black, draw, circle, inner sep=2pt] (n6) at (2,1) {};

\draw (1,-1) rectangle (2,0);
\draw (2,-1) rectangle (3,0);

\node[fill=black, draw, circle, inner sep=2pt] (n7) at (1,-1) {};
\node[fill=black, draw, circle, inner sep=2pt] (n8) at (2,-1) {};
\node[fill=black, draw, circle, inner sep=2pt] (n9) at (3,-1) {};
\node[fill=black, draw, circle, inner sep=2pt] (n10) at (3,0) {};

\node at (4,-1) {$\dots$};
\node at (4,0.5) {$Z_n$};

\draw (5,-2) rectangle (6,-1);
\draw (6,-2) rectangle (7,-1);
\draw (6,-3) rectangle (7,-2);
\draw (7,-3) rectangle (8,-2);

\node[fill=black, draw, circle, inner sep=2pt] (n11) at (5,-2) {};
\node[fill=black, draw, circle, inner sep=2pt] (n12) at (6,-1) {};
\node[fill=black, draw, circle, inner sep=2pt] (n13) at (6,-2) {};
\node[fill=black, draw, circle, inner sep=2pt] (n14) at (7,-1) {};
\node[fill=black, draw, circle, inner sep=2pt] (n15) at (6,-3) {};
\node[fill=black, draw, circle, inner sep=2pt] (n16) at (7,-2) {};
\node[fill=black, draw, circle, inner sep=2pt] (n17) at (7,-3) {};
\node[fill=black, draw, circle, inner sep=2pt] (n18) at (8,-2) {};
\node[fill=black, draw, circle, inner sep=2pt] (n19) at (5,-1) {};
\node[fill=black, draw, circle, inner sep=2pt] (n20) at (8,-3) {};

\end{tikzpicture}

 \caption{The linear chain and the zigzag chain.}
    \label{f0}

\end{figure}

\noindent

Transitioning to the random context, a random polyomino chain ($RPC_{n}=RPC(n,p)$) can be generated as follows: Figures \ref{f1} illustrate $RPC_{n}$ for $n=1$ and $n=2$. When $n \geq 3$, a new square can be added in two ways, leading to $RPC_{n}^{1}$ and $RPC_{n}^{2}$ with probabilities $p$ and $1-p$, respectively (see Figure \ref{f2}), where $0< p < 1$. Hence, a topological index of a random polyomino chain at time $n$ becomes a random variable.

Continuing along the same vein, it is important to highlight that, the calculation of different topological indices, extreme problems, and exploring topics, such as perfect pairings, dimer covering and their connection with caterpillar trees, among others \cite{bd15,bd16,bd17,bd18,bd19, v14,v15,v16,v17,bd39} are active investigations on chains of polyominoes and chains of random polyominoes. 

Inspired by the aforementioned research and our previous investigation of certain topological indices based on degree over random polyomino chains and polyomino chains \cite{bd14}, the primary objective of this manuscript is to establish explicit formulas for the first and second Zagreb connection indices. Additionally, we aim to analyze their extreme values within the context of polyomino chains. Under the random framework, our goal is to determine the expected values, variances, and asymptotic behavior.

\begin{figure}[h!]

 \centering
\begin{tikzpicture}


\draw (0,0) rectangle (1,1);

\draw (3,0) rectangle (4,1);

\draw (4,0) rectangle (5,1);

\node[fill=black, draw, circle, inner sep=2pt] (n1) at (0,0) {};
\node[fill=black, draw, circle, inner sep=2pt] (n2) at (1,0) {};
\node[fill=black, draw, circle, inner sep=2pt] (n3) at (0,1) {};
\node[fill=black, draw, circle, inner sep=2pt] (n4) at (1,1) {};
\node[fill=black, draw, circle, inner sep=2pt] (n5) at (3,0) {};
\node[fill=black, draw, circle, inner sep=2pt] (n6) at (3,1) {};
\node[fill=black, draw, circle, inner sep=2pt] (n7) at (4,0) {};
\node[fill=black, draw, circle, inner sep=2pt] (n8) at (4,1) {};
\node[fill=black, draw, circle, inner sep=2pt] (n7) at (5,1) {};
\node[fill=black, draw, circle, inner sep=2pt] (n8) at (5,0) {};

\end{tikzpicture}

    \caption{The graphs of $RPC_{1}$ and $RPC_{2}$.}

    \label{f1}
\end{figure}
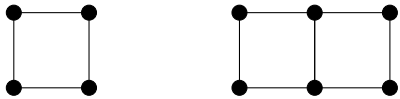

\begin{figure}[h!]
 \centering
\begin{tikzpicture}
  \clip (-3,-2) rectangle (9,3);

\draw (0,0) rectangle (1,1);

\draw (1,0) rectangle (2,1);

\draw (2,0) rectangle (3,1);

\node[fill=black, draw, circle, inner sep=2pt] (n1) at (0,0) {};
\node[fill=black, draw, circle, inner sep=2pt] (n2) at (1,0) {};
\node[fill=black, draw, circle, inner sep=2pt] (n3) at (0,1) {};
\node[fill=black, draw, circle, inner sep=2pt] (n4) at (1,1) {};
\node[fill=black, draw, circle, inner sep=2pt] (n5) at (2,0) {};
\node[fill=black, draw, circle, inner sep=2pt] (n6) at (3,0) {};
\node[fill=black, draw, circle, inner sep=2pt] (n7) at (2,1) {};
\node[fill=black, draw, circle, inner sep=2pt] (n8) at (3,1) {};
  \draw[ dashed] (n1) to  [out=180, in=90,looseness=5] (n4);

\draw (5,0) rectangle (6,1);
\draw (6,0) rectangle (7,1);
\draw (6,-1) rectangle (7,0);

\node[fill=black, draw, circle, inner sep=2pt] (n9) at (5,0) {};
\node[fill=black, draw, circle, inner sep=2pt] (n10) at (6,0) {};
\node[fill=black, draw, circle, inner sep=2pt] (n11) at (7,0) {};
\node[fill=black, draw, circle, inner sep=2pt] (n12) at (5,1) {};
\node[fill=black, draw, circle, inner sep=2pt] (n13) at (6,1) {};
\node[fill=black, draw, circle, inner sep=2pt] (n14) at (7,1) {};
\node[fill=black, draw, circle, inner sep=2pt] (n15) at (6,-1) {};
\node[fill=black, draw, circle, inner sep=2pt] (n16) at (7,-1) {};
\draw[ dashed] (n9) to  [out=180, in=90,looseness=5] (n13);

\node at (2.2,0.5) {\tiny $p$};
\node at (6.5,-0.2) {\tiny $1-p$};
\node at (1.5,-0.5) {\small $RPC_{n}^{1}$};
\node at (6.5,-1.5) {\small $RPC_{n}^{2}$};
\end{tikzpicture}
   \caption{The two link ways for $RPC_{n} (n \geq 3)$.}
    \label{f2}
\end{figure}
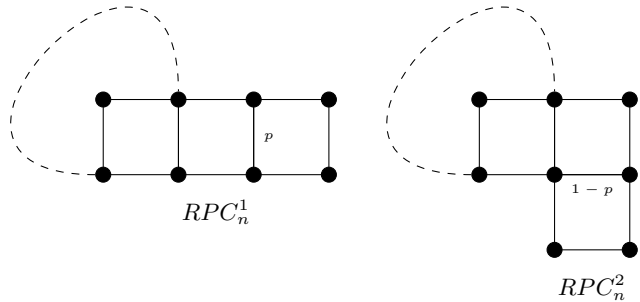

\section{Polyomino Chain}

\noindent
 This section aims to derive explicit mathematical expressions for computing the first Zagreb connection index and the second Zagreb connection index for a polyomino chain with $n$ squares ($PC_{n}$) denoted as $ZC1_{n}$ and $ZC2_n$, respectively. Let $m \geq 1$ and $i \in\{1,2, \ldots, m\}$ be given. It is worth noting that a polyomino chain $PC_{n}$ consists of a sequence of segments $s_{1}, s_{2}, \ldots, s_{m}$ (refer to Figure \ref{f5}) with lengths $l\left(s_{i}\right)=l_{i}$, satisfying the condition that $\sum_{i=1}^{m}l_{i}=n+m-1$. Moreover, a chain of polyominoes $PC_{n}$ induces a sequence of 1s and 2s as follows: If we denote $L_{n}$ as a random variable with rank $\{1,2\}$ such that $p= \mathbb{P}(L_{n} = 1)$, i.e., $L_{n}$ represents the selected link at time $n$, then we have a link of type $j$ at time $n$ if $L_n=j$. Having established this information, we proceed to present the two fundamental results of this section.

 \begin{figure}[h!]
    \centering
   \begin{tikzpicture}

\draw (0,0) rectangle (1,1);

\draw (1,0) rectangle (2,1);
\draw (2,0) rectangle (3,1);
\draw (3,0) rectangle (4,1);
\draw (3,-1) rectangle (4,0);
\draw (3,-2) rectangle (4,-1);
\draw (4,-2) rectangle (5,-1);
\draw (4,-3) rectangle (5,-2);
\draw (5,-3) rectangle (6,-2);
\draw (6,-3) rectangle (7,-2);

\node[fill=black, draw, circle, inner sep=2pt] (n1) at (0,0) {};
\node[fill=black, draw, circle, inner sep=2pt] (n2) at (1,0) {};
\node[fill=black, draw, circle, inner sep=2pt] (n3) at (0,1) {};
\node[fill=black, draw, circle, inner sep=2pt] (n4) at (1,1) {};
\node[fill=black, draw, circle, inner sep=2pt] (n5) at (2,0) {};

\node[fill=black, draw, circle, inner sep=2pt] (n7) at (2,1) {};
\node[fill=black, draw, circle, inner sep=2pt] (n7) at (3,0) {};
\node[fill=black, draw, circle, inner sep=2pt] (n7) at (3,1) {};
\node[fill=black, draw, circle, inner sep=2pt] (n7) at (4,0) {};
\node[fill=black, draw, circle, inner sep=2pt] (n7) at (4,1) {};
\node[fill=black, draw, circle, inner sep=2pt] (n7) at (3,-1) {};
\node[fill=black, draw, circle, inner sep=2pt] (n7) at (4,-1) {};

 \node[fill=black, draw, circle, inner sep=2pt] (n7) at (2,1) {};
\node[fill=black, draw, circle, inner sep=2pt] (n7) at (5,-1) {};
\node[fill=black, draw, circle, inner sep=2pt] (n7) at (3,-2) {};
\node[fill=black, draw, circle, inner sep=2pt] (n7) at (4,-2) {};
\node[fill=black, draw, circle, inner sep=2pt] (n7) at (5,-2) {};
\node[fill=black, draw, circle, inner sep=2pt] (n7) at (6,-2) {};
\node[fill=black, draw, circle, inner sep=2pt] (n7) at (7,-2) {};

\node[fill=black, draw, circle, inner sep=2pt] (n7) at (4,-3) {};
\node[fill=black, draw, circle, inner sep=2pt] (n7) at (5,-3) {};
\node[fill=black, draw, circle, inner sep=2pt] (n7) at (6,-3) {};
\node[fill=black, draw, circle, inner sep=2pt] (n7) at (7,-3) {};

\node at (-0.2,0.5) {\tiny $s_1$};
\node at (3.5,1.2) {\tiny $s_2$};
\node at (2.8,-1.5) {\tiny $s_3$};
\node at (4.5,-0.8) {\tiny $s_4$};
\node at (3.8,-2.5) {\tiny $s_5$};

\end{tikzpicture}
   \caption{Segments of a polyomino chain.}
    \label{f5}
\end{figure}
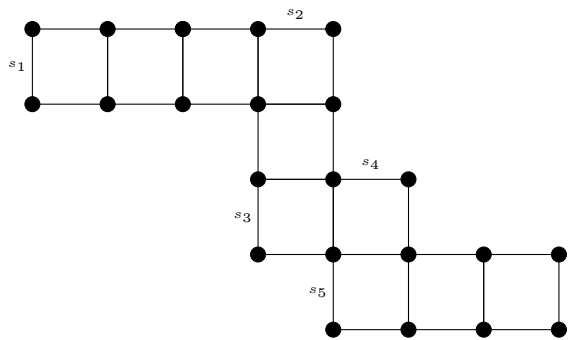

\begin{figure}[h!]
\centering
    \begin{tikzpicture}

  \clip (-3,-1) rectangle (4,3);

\draw (0,0) rectangle (1,1);

\draw (1,0) rectangle (2,1);

\node[fill=white, draw, circle, inner sep=2pt] (n1) at (0,0) {\tiny $v_5$};
\node[fill=white, draw, circle, inner sep=2pt] (n2) at (1,0) {\tiny $v_4$};
\node[fill=white, draw, circle, inner sep=2pt] (n3) at (0,1) {\tiny $v_6$};
\node[fill=white, draw, circle, inner sep=2pt] (n4) at (1,1) {\tiny $v_1$};
\node[fill=white, draw, circle, inner sep=2pt] (n5) at (2,0) {\tiny $v_3$};

\node[fill=white, draw, circle, inner sep=2pt] (n7) at (2,1) {\tiny $v_2$};

  \draw[ dashed] (n1) to  [out=180, in=90,looseness=5] (n4);

\end{tikzpicture}

   \caption{Vertices $v_i$ in $PC_{n-1}$.}
    \label{fn1}
\end{figure}
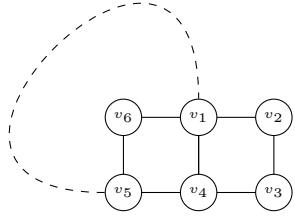

 \begin{theorem}\label{t12}
If $PC_{n}$ is a polyomino chain comprising $n \geq 4$ squares and $m\geq 1$ segment(s),  then
 $$ZC1_{n}= 
      2(16n+10m-2\gamma_1+\gamma_2+I_{\{l_1\geq 4\}}+I_{\{l_m\geq 4\}}-34),
$$
where,  $\gamma_1=\displaystyle\sum_{i=1}^{m}I_{\{l_i=2\}}$ and $\gamma_2=\displaystyle\sum_{i=2}^{m-2}I_{\{l_i=l_{i+1}=2\}}.$

 \end{theorem}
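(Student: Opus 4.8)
The plan is to evaluate $ZC1_n=\sum_{v\in V(PC_n)}\tau_v^2$ directly, using two structural facts about polyomino chains. First, $PC_n$ is triangle-free and every vertex $v$ lies on exactly $\deg v-1$ four-cycles (the unit faces incident to $v$, which are the only $4$-cycles of $PC_n$); a short inclusion--exclusion on $\bigcup_{w\sim v}N(w)=\{v\}\cup\{u:\ d(u,v)=2\}$ then gives the pointwise identity $\tau_v=\big(\sum_{w\sim v}\deg w\big)-2\deg v+1$, so that $ZC1_n=\sum_{v}\big(\sigma_v-2\deg v+1\big)^2$ with $\sigma_v:=\sum_{w\sim v}\deg w$. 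Second, $\tau_v$ therefore depends only on the degrees occurring in the radius-$2$ ball about $v$, which is controlled by a bounded window of the segment-length sequence $l_1,\dots,l_m$; in particular $\tau_v=4$ for every vertex lying in a medial square or deep inside a long segment. So I would organise the computation around the baseline $\sum_v4^2=16(2n+2)=32n+32$ and then add the finitely many nonzero corrections $\tau_v^2-16$ contributed by the vertices in or near a terminal square or a kink.

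I would first record the local degree data: in $PC_n$ the four non-shared corners of the two terminal squares have degree $2$, each of the $m-1$ kink squares carries one degree-$2$ ``outer corner'' and one degree-$4$ ``inner corner'' of its turn, and the remaining $2n-2m$ vertices have degree $3$. Then I would compute the corrections end by end. At a terminal square with $l_1\ge4$ the only nongeneric vertices are the four in the terminal square itself, with $\tau_v\in\{2,2,3,3\}$, giving a fixed correction; for $l_1=3$ one of the two shared corners between the first two squares moves from $\tau_v=3$ to $\tau_v=4$ (it is simultaneously the incoming neighbour of the first kink's inner corner, so the end-region and the first kink-region overlap in it), and for $l_1=2$ further vertices shift. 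Comparing these three sub-cases, and symmetrically at the other end, is what produces the indicators $I_{\{l_1\ge4\}}$ and $I_{\{l_m\ge4\}}$ and part of the constant $-34$; the hypothesis $n\ge4$ enters here to keep the two end-regions disjoint.

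Then I would treat the $m-1$ kinks. An \emph{isolated} kink, one whose radius-$2$ neighbourhood meets no other special square, affects exactly four vertices — the outer corner with $\tau_v=3$, and the inner corner together with its two neighbours one step along each adjacent segment, all three with $\tau_v=5$ — for a net correction of $+20$ per kink, which produces the $10m$ term. When one of the two segments at a kink has length $2$, the ball around the kink also meets the neighbouring kink or terminal square, and one or more of the ``$5$''s drops back to $4$; tallying this over all length-$2$ segments yields the $-2\gamma_1$ term, and when two \emph{consecutive interior} segments both have length $2$ a single vertex would be lowered twice, which the $+\gamma_2$ term corrects — the range $i=2,\dots,m-2$ in $\gamma_2$ is exactly the set of consecutive length-$2$ pairs not already absorbed by the end analysis. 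Collecting all the constants gives the $-34$, and the overall factor $2$ reflects the left--right symmetry of the count. A base case such as $n=4$ (the chains $Li_4$, the $L$-shaped $PC_4$, and $Z_4$) fixes the constants and serves as a check.

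The main obstacle is precisely this overlap bookkeeping: the ``special'' regions around a kink and an adjacent terminal square, or around two consecutive kinks, overlap whenever the segment between them has length $2$ or $3$, so one cannot just add independently computed local corrections — a vertex may be nongeneric for two reasons at once and its single true value of $\tau_v$ must be read off from the actual configuration. Making every vertex count exactly once is what forces the precise index ranges in $\gamma_1$ (all segments) versus $\gamma_2$ (interior consecutive pairs only) and the two boundary indicators. A convenient device is to fix in advance a partition of $V(PC_n)$ — say, assign to the $j$-th square of the chain the two vertices on the edge joining it to the $(j-1)$-st square, and the two remaining corners to the first square — and then evaluate $\tau_v=\sigma_v-2\deg v+1$ on the representatives of each square type (terminal, medial, kink, and the squares within distance $2$ of a kink), reading the neighbouring degrees directly off the segment lengths. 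The same content can instead be packaged, in the form matching Figure~\ref{fn1} and the random model treated later, as a memory-two recursion $ZC1_n=ZC1_{n-1}+\Delta_n$ with $\Delta_n$ a function of the link types $L_n,L_{n-1},L_{n-2}$ acting on the vertices $v_1,\dots,v_6$ near the terminal square of $PC_{n-1}$, which is then telescoped and the link-pattern counts converted into $m,\gamma_1,\gamma_2,I_{\{l_1\ge4\}},I_{\{l_m\ge4\}}$.
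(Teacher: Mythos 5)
Your route is genuinely different from the paper's. The paper never computes $\tau_v$ from first principles for a general vertex: it sets up the one-step recursion $ZC1_n=ZC1_{n-1}+g_1(L_{n-2},L_{n-1},L_n)$, tabulates the $\tau$-values of the six vertices in the last two squares of $PC_{n-1}$ as functions of the recent link types (Table~\ref{ta1}), telescopes, and then converts the pattern counts $X_{i,j,k}$ into $m$, $\gamma_1$, $\gamma_2$ and the boundary indicators through identities such as $X_2=m-1$, $X_{2,1}=m-\gamma_1-1+I_{\{l_1=2\}}$, $X_{2,2,2}=\gamma_2$. You instead work statically: the identity $\tau_v=\sigma_v-2\deg v+1$ (valid because $PC_n$ is triangle-free and its only $4$-cycles are the $\deg v-1$ unit faces at $v$) reduces everything to degrees in a radius-two window, and you sum the corrections $\tau_v^2-16$ over the nongeneric vertices. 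The instances you actually compute are correct and reproduce the theorem's coefficients: the baseline $16(2n+2)$, the $+20$ per isolated kink giving $20(m-1)$, the $-38$ per end with $l_1\ge 4$ (so $32n+32-76+20m-20=32n+20m-64$, matching the statement with $\gamma_1=\gamma_2=0$ and both indicators equal to $1$), and the $l_1=3$ overlap in which the shared corner of the first two squares belongs to both the end region and the first kink's region and has true value $\tau=4$ rather than the naive $3$ or $5$. What your approach buys is structural insight --- it explains why $\tau_v=4$ generically and ties $ZC1$ to purely degree-based data --- and it is symmetric in the two ends rather than tied to a growth order. What it costs is exactly the casework you flag as the main obstacle: the $l_1=2$ ends, the interior segments of length $2$ and $3$, and consecutive length-$2$ pairs still have to be settled by explicit local computations before the $-2\gamma_1+\gamma_2$ bookkeeping is proved rather than read off from the target formula, and your fixed vertex partition is the right device for making each vertex count once there. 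The paper's recursion absorbs that bookkeeping into the three-step memory of $g_1$ and is then reused verbatim for the extremal results (Propositions~\ref{c1} and~\ref{p1d}) and the Markov-chain limit theorems of Section~\ref{s2}, so if you complete your plan it is still worth extracting the recursive form you mention at the end.
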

 \begin{proof}
 
\noindent
Given $PC_{n}$, conducting the one-step analysis
 
\begin{eqnarray*}
ZC1_{n}-ZC1_{n-1}=\sum_{v \in V_{n,1}}  \tau_{n,v}^2 + \sum_{v \in V_{n,2}} ( \tau_{n,v}^2-  \tau_{n-1,v}^2),
\end{eqnarray*}
\noindent  
where $ \tau_{k,v}:= \tau_{v}$ is calculated within $PC_{k}$,  $V_{n,1}$ represents the vertices added when progressing from step $n-1$ to $n$, and $V_{n,2}$ denotes the vertices that change their $ \tau_v$ during this advancement. By separately analyzing each sum, we obtain that:

$$\sum_{v \in V_{n,1}}  \tau_{n,v}^2 =8+5I_{\{L_{n}=2\}},$$
and

\begin{align*}
 \sum_{v \in V_{n,2}} (  \tau_{n,v}^2-  \tau_{n-1,v}^2) &=\sum_{v \in V_{n,2}} ( ( \tau_{n-1,v}+1)^2-  \tau_{n-1,v}^2)\\
   &=|V_{n,2}|+2\sum_{v \in V_{n,2}}  \tau_{n-1,v}\\
    &=4+2\sum_{i=1}^{4} \tau_{n-1,v_i}+(2 \tau_{n-1,v_5}+1)I_{\{L_n=2\}}.
\end{align*}

\noindent
where the vertices $v_i$ for $i\in[5]$ are located in the two end squares of the polyomino chain at time $n-1$, as shown in Figure \ref{fn1}. Substituting these values, we derive: 

$$ZC1_{n}=ZC1_{n-1}+12+2( \tau_{n-1,v_5}+3)I_{\{L_n=2\}}+2\sum_{i=1}^{4} \tau_{n-1,v_i}.$$
Now, we need to examine the values of $ \tau_{n-1,v_i}$ for $i\in[5]$. They can be obtained using the above one-step analysis, but now considering the chain in reverse; since they form part of the two end squares of the polyomino chain at time $n-1$. Referring to this, Table \ref{ta1} provides the corresponding values of $ \tau_{n-1,v_i}$ in each entry, i.e, the entry in the $n$-th row and $i$-th column, give us the number of  all the vertices in $PC_{n-1}$ situated at a
distance of two from $v_{i}$, where $v_{i}$ is one of the last six vertices of the $PC_{n-1}$. 

\begin{table}[htbp]
\centering
\begin{tabular}{|c|c|c|c|c|c|c|}
\hline
\multirow{2}{*}{$n$} & \multicolumn{6}{c|}{$i$} \\ \cline{2-7}
                     & 1 & 2 & 3 & 4 & 5 & 6 \\ \hline
3                    & 2 & 2 & 2 & 2 & 2 & 2 \\ \hline
4                    & 3 & $2+I_{\{L_{3}=2\}}$ & 2 & 3 & 3 & 3 \\ \hline
5                    & $4-I_{\{L_{3}=L_{4}=1\}}$ & $2+I_{\{L_{4}=2\}}$ & 2 & 3 & $4-I_{\{L_{3}=1, L_{4}=2\}}$ & 4 \\ \hline
$\geq 6$             & $4-I_{\{L_{n-2}=L_{n-1}=1\}}$ & $2+I_{\{L_{n-1}=2\}}$ & 2 & 3 & $4-I_{\{L_{n-2}=1, L_{n-1}=2\}}$ & $5-I_{\{L_{n-3}=L_{n-2}=1\}}$ \\ \hline
\end{tabular}
\caption{Table with the values of $\tau_{n-1,v_i}$, where $v_{i}$ is one of the last six vertices of the $PC_{n-1}$.}
\label{ta1}
\end{table}

Therefore, for $n\geq 5$

\begin{align} \label{recursiveZ1}
ZC1_n=ZC1_{n-1}+g_1(L_{n-2},L_{n-1},L_{n}),
\end{align}
where $$g_1(L_{n-2},L_{n-1},L_{n})=34+14I_{\{L_{n}=2\}}-2I_{\{L_{n-2}=1,L_{n-1}=L_n=2\}}+2I_{\{L_{n-1}=2\}}-2I_{\{L_{n-2}=L_{n-1}=1\}}.$$
From the information above, applying the recursive formula (\ref{recursiveZ1}) for $n \geq 5$ yields:

\begin{equation}\label{szc}
ZC1_n=ZC1_4 + \sum_{(i,j,k)\in {\{1,2\}}^3} g_{1}(i,j,k)X_{i,j,k},
\end{equation}
where $X_{i,j,k}=|\{w\in\{5,\dots,n\}~:~L_{w-2}=i, L_{w-1}=j$ and $L_{w}=k$ ~in~$PC_{n} \}|$. Thus, upon substituting $n = 4$ into the preceding expression, we confirm the recovery of the initial value $ZC1_4$. Then, note that
\begin{align*}
    X_{1,1,1}+X_{2,1,1} &= X_{1,1}-I_{\{l_1\geq 4\}}, \\
    X_{1,1,2}+X_{2,1,2} &= X_{1,2}-I_{\{l_1=3\}}, \\
    X_{1,2,1}+X_{2,2,1} &= X_{2,1}-I_{\{l_1=2,l_2\neq 2\}}, \\
    X_{1,2,2}+X_{2,2,2} &= X_{2,2}-I_{\{l_1=l_2=2\}}.
\end{align*}
where, similarly, $X_{i,j}=|\{w\in\{4,\dots,n\}~:~L_{w-1}=i$ and $L_{w}=j$ in $PC_{n}\}|$. Now, due to the values of $g_1$ and the above relations, Equation (\ref{szc}) changes to 

\begin{equation}\label{ue1}
\begin{aligned}
ZC1_n &= ZC1_4 + 32(X_{1,1}-I_{\{l_1\geq 4\}})+46(X_{1,2}-I_{\{l_1=3\}})+36(X_{2,1}-I_{\{l_1=2,l_2\neq 2\}})\\
&\quad +48(X_{2,2}-I_{\{l_1=l_2=2\}}) +2X_{2,1,1}+2X_{2,1,2}+2X_{2,2,2}.
\end{aligned}
\end{equation}
Attempting to condense the main expression once more, observe that
\begin{align*}
    X_{2,1,1}+X_{2,1,2} &= X_{2,1}-I_{\{l_m=3\}}, \\
     X_{1,1}+X_{2,1} &= X_{1}-I_{\{l_1\neq 2\}},\\
    X_{2,2}+X_{1,2} &= X_{2}-I_{\{l_1=2\}},\\
    X_{2,1}+X_{2,2} &= X_{2}-I_{\{l_m=2\}},
   \end{align*}
where, $X_{2}=|\{w\in\{3,\dots,n\}~:~L_{w}=2$ in $PC_{n}\}|$ and $X_{1}=|\{w\in\{3,\dots,n\}~:~L_{w}=1$ in $PC_{n}\}|$. After this step, the update Equation (\ref{ue1}) is

\begin{equation}\label{ue}
\begin{aligned}
ZC1_n &= ZC1_4 -32+32X_1+48X_2 -14I_{\{l_2=2\}}-2I_{\{l_m=2\}}-32I_{\{l_1\geq 4\}}-46I_{\{l_1=3\}} \\
&\quad -36I_{\{l_1=2,l_2\neq 2\}}-48I_{\{l_1=l_2=2\}}-2I_{\{l_m=3\}} +4X_{2,1} +2X_{2,2,2}.
\end{aligned}
\end{equation}
Let us review the unknown variables: if at time $w$, $L_{w}=2$ then the last segment in $PC_{w-1}$ is completed (initiating a new segment in $PC_{w}$). Conversely, if at time $w$, $L_{w}=1$, then a square is added to the last segment in $PC_{w-1}$. Hence, $X_{2}=m-1$, $X_{1}=n-m-1$, $X_{2,1}=m-\gamma_1-1+I_{\{l_1=2\}}$ and $X_{2,2,2}=\gamma_2$. To determine $ZC1_4$, we can use the same recursive expression, considering the values of $ \tau_{3,v_i}$. Thus,
\begin{equation*}
\begin{aligned}
ZC1_4 &= 84+12I_{\{L_4=2\}}+12I_{\{L_3=2\}} \\
&=84+12I_{\{l_1=3\}}+12I_{\{l_1=2,l_2\neq 2\}}+24I_{\{l_1=l_2=2\}}.
\end{aligned}
\end{equation*}

Finally, substituting in (\ref{ue}) the values found and properly rewriting the excess terms, we obtain the result.

\end{proof}

\begin{theorem}\label{t13}
If $PC_{n}$ is a polyomino chain comprising $n \geq 5$ squares and $m\geq 1$ segment(s),  then

\begin{align*}
ZC2_{n} &= 48n+34m-9\gamma_1-6\sum_{\scriptscriptstyle i=1,m}I_{\{l_i=2\}}-4\sum_{\scriptscriptstyle i=1,m}I_{\{l_i=3\}}  -2\sum_{\scriptscriptstyle i=1,m}I_{\{l_i=4\}}-\sum_{\scriptscriptstyle i=1,m-2}I_{\{l_i=l_{i+1}=l_{i+2}=2,m\geq 3\}} \\
&\quad +I_{\{l_1=2,l_2 \geq 4,m\geq 2\}}+I_{\{l_m=2,l_{m-1} \geq 4,m\geq 2\}}+\gamma_3+\gamma_4+\gamma_5I_{\{m\geq 4\}}-112,
\end{align*}
where,  $\gamma_1=\displaystyle\sum_{i=1}^{m}I_{\{l_i=2\}}$, $\gamma_3=\displaystyle\sum_{i=2}^{m-2}I_{\{l_{i+1}=3 \lor l_i=l_{i+1}=2 \lor l_i,l_{i+1}\neq 2 \lor l_i=3\}}$, $\gamma_4=\displaystyle\sum_{i=1}^{m}I_{\{l_i=4\}}$ and $\gamma_5=2\displaystyle\sum_{i=3}^{m-2}I_{\{l_i=2\}}-m+I_{\{l_2=2\}}+I_{\{l_{m-1}=2\}}+3$.

 \end{theorem}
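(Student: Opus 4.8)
The plan is to mirror the one-step Markov analysis already used for $ZC1_n$ in Theorem~\ref{t12}, but now tracking the product $\tau_u\tau_v$ over edges rather than squares of $\tau_v$. First I would set up the telescoping difference
\begin{align*}
ZC2_n - ZC2_{n-1} = \sum_{uv\in E_{n,1}}\tau_{n,u}\tau_{n,v} + \sum_{uv\in E_{n,2}}\bigl(\tau_{n,u}\tau_{n,v}-\tau_{n-1,u}\tau_{n-1,v}\bigr),
\end{align*}
where $E_{n,1}$ is the set of edges created in passing from $PC_{n-1}$ to $PC_n$ and $E_{n,2}$ is the set of pre-existing edges at least one of whose endpoints changes its second-neighbor count. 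The crucial input is again Table~\ref{ta1}: the values $\tau_{n-1,v_i}$ for the last six vertices $v_1,\dots,v_6$ of $PC_{n-1}$ are already tabulated there, and since $ZC2$ involves products of adjacent $\tau$'s, I will need not just these values but which of the $v_i$ are adjacent and how the new square attaches in each of the two cases $L_n=1$ and $L_n=2$. Carefully enumerating the affected edges (those incident to $v_1,\dots,v_6$ in the old chain, plus the three or four new edges of the appended square), and substituting the tabulated $\tau$ values, yields an increment $g_2(L_{n-3},L_{n-2},L_{n-1},L_n)$ depending on the last four link types — one more than in the $ZC1$ case, because a vertex's $\tau$ can be influenced by squares two steps back (note the $I_{\{L_{n-3}=L_{n-2}=1\}}$ term already appearing in the $\tau_{n-1,v_6}$ entry).

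Next I would iterate the recursion $ZC2_n = ZC2_{n-1} + g_2(L_{n-3},L_{n-2},L_{n-1},L_n)$ from a suitable base case (here $ZC2_5$, computed directly from the small-$n$ rows of Table~\ref{ta1}), obtaining
\begin{align*}
ZC2_n = ZC2_5 + \sum_{(h,i,j,k)\in\{1,2\}^4} g_2(h,i,j,k)\, X_{h,i,j,k},
\end{align*}
with $X_{h,i,j,k}$ counting the windows $w\in\{6,\dots,n\}$ with $(L_{w-3},L_{w-2},L_{w-1},L_w)=(h,i,j,k)$. Then, exactly as in the proof of Theorem~\ref{t12}, I would collapse these length-four window counts down to length-three, length-two, and finally length-one window counts $X_{i,j,k}$, $X_{i,j}$, $X_i$ via the obvious marginalization identities, each collapse introducing boundary-correction indicators supported near the first segment (terms in $l_1,l_2,l_3$) and near the last segment (terms in $l_m,l_{m-1},l_{m-2}$). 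The key translation is the same dictionary as before: $X_2 = m-1$ counts segment completions, $X_1 = n-m-1$ counts square additions within a segment, $X_{2,1} = m-1-\gamma_1+I_{\{l_1=2\}}$, $X_{2,2,2}=\gamma_2$, and the longer patterns $X_{2,1,1,\cdot}$, $X_{2,2,2,2}$, $X_{1,2,2,2}$, etc., are read off as sums of indicators over consecutive segment-length patterns — this is where $\gamma_3$, $\gamma_4$, $\gamma_5$ and the terms $I_{\{l_i=l_{i+1}=l_{i+2}=2\}}$, $I_{\{l_1=2,l_2\ge 4\}}$ originate.

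The main obstacle I anticipate is bookkeeping, not conceptual difficulty: because $g_2$ depends on four consecutive links and because products of $\tau$-values interact across two adjacent squares, there are substantially more distinct local configurations than in the $ZC1$ computation, and the boundary terms at both ends of the chain proliferate (hence the somewhat unwieldy $\gamma_3,\gamma_4,\gamma_5$ and the many $I_{\{l_1=\cdot\}}$, $I_{\{l_m=\cdot\}}$ corrections, including the $m\ge 2$, $m\ge 3$, $m\ge 4$ side conditions that handle short chains where the "first" and "last" segment regions overlap). I would organize the edge enumeration into a small table analogous to Table~\ref{ta1} — listing, for each of the two attachment types and each relevant value of $(L_{n-3},L_{n-2},L_{n-1})$, the complete list of edges in $E_{n,1}\cup E_{n,2}$ together with the old and new $\tau$-products — to keep the $2^4$ evaluations of $g_2$ manageable and verifiable. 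A useful internal consistency check at the end is to substitute the linear chain ($m=1$, $l_1=n$) and the zigzag chain ($l_i=2$ for all $i$, so $m=n-1$) and confirm the formula reproduces the known closed forms for those extremal families, and also to verify that re-substituting $n=5$ recovers $ZC2_5$.
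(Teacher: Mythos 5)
Your proposal follows essentially the same route as the paper's proof: the same edge-based one-step decomposition into $E_{n,1}$ and $E_{n,2}$, the same four-link increment $g_2(L_{n-3},L_{n-2},L_{n-1},L_n)$ (correctly attributed to the $\tau_{n-1,v_6}$ dependence reaching back to $L_{n-3}$), iteration from the base case $ZC2_5$, marginalization of the window counts $X_{h,i,j,k}$ down to $X_{i,j,k}$, $X_{i,j}$, $X_i$ with boundary indicators, and the same dictionary $X_2=m-1$, $X_1=n-m-1$, $X_{2,1}=m-\gamma_1-1+I_{\{l_1=2\}}$ translating link statistics into segment lengths. The plan is correct in outline; the remaining work is the bookkeeping you already anticipate, which is exactly where the paper's own proof also compresses its final steps.
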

 \begin{proof}
 
\noindent
By conducting the same analysis at one step, we conclude that
 
\begin{eqnarray*}
ZC2_{n}-ZC2_{n-1}=\sum_{uv \in E_{n,1}}  f_{n}(uv)+ \sum_{uv \in E_{n,2}} (f_{n}(uv)- f_{n-1}(uv)),
\end{eqnarray*}
\noindent  
where $f_{k}(uv):= \tau_{k,u} \tau_{k,v}$, $E_{n,1}$ are the edges that are added as we advance from step $n-1$ to $n$ and $E_{n,2}$ are the edges that modify their $\tau_{u}\tau_{v}$ as we advance from step $n-1$ to $n$. If we focus on each sum individually, it turns out that for $n \geq 6$,

$$\sum_{uv \in E_{n,1}}  f_{n}(uv)=16+2I_{\{L_{n-1}=2,L_{n}=1\}}+8I_{\{L_{n}=2\}},$$
and
\begin{align*} 
&\displaystyle\sum_{uv \in E_{n,2}} (f_{n}(uv)- f_{n-1}(uv))= 4+2\sum_{i=1}^{4}  \tau_{n-1,v_i}+ \sum_{i=5}^{6}  \tau_{n-1,v_i}+( \tau_{n-1,v_4}+ \tau_{n-1,v_6}+1)I_{\{L_n=2\}} \\ &~~~~~~~~~~~~~~~~~~~~~~~~~~~~~~~~~~~~~~~~~~+(4-I_{\{L_{n-3}=1,L_{n-2}=2\}})(I_{\{L_{n-1}=1,L_{n}=2\}}+I_{\{L_{n-1}=2\}})+I_{\{L_{n-1}=2\}},  
\end{align*}
where the vertices $v_i$ with $i\in[5]$ are those used previously and the vertex $v_{6}$ is also presented in Figure \ref{fn1}. Then, substituting for $n \geq 6$, we obtain that 
\begin{align}\label{recursive-Z2}
ZC2_{n}=ZC2_{n-1}+g_2(L_{n-3},L_{n-2},L_{n-1},L_{n}),
\end{align}
where
\begin{equation*}
\begin{split}
g_2(L_{n-3},L_{n-2},L_{n-1},L_{n})= & 51+7I_{\{L_{n-1}=2\}}+17I_{\{L_{n}=2\}}-I_{\{L_{n-2}=1,L_{n-1}=2\}} -I_{\{L_{n-3}=L_{n-2}=1\}}-2I_{\{L_{n-2}=L_{n-1}=1\}} \\
& +2I_{\{L_{n-1}=2,L_{n}=1\}}+4I_{\{L_{n-1}=1,L_{n}=2\}} -I_{\{L_{n-3}=L_{n-1}=1, L_{n-2}=L_n=2\}}-I_{\{L_{n-3}=1,L_{n-2}=L_{n-1}=2\}} \\
& -I_{\{L_{n-3}=L_{n-2}=1,L_n=2\}}.
\end{split}
\end{equation*}
Now, by applying the recursive formula (\ref{recursive-Z2}) for $n \geq 6$, it follows that:

\begin{equation}\label{szc1}
ZC2_n=ZC2_5 + \sum_{(i,j,k,l)\in {\{1,2\}}^4} g_{1}(i,j,k,l)X_{i,j,k,l},
\end{equation}
where $X_{i,j,k,l}=|\{w\in\{6,\dots,n\}~:~L_{w-3}=i, L_{w-2}=j,L_{w-3}=k$ and $L_{w}=l$~in ~$PC_n\}|$. Verifying that upon substituting $n$ with 5 in the previous expression, we retrieve the initial value $ZC2_5$. Additionally, 
\begin{align*}
    X_{1,1,1,1}+X_{2,1,1,1} &= X_{1,1,1}-I_{\{l_1\geq 5\}}, \\
    X_{1,1,1,2}+X_{2,1,1,2} &= X_{1,1,2}-I_{\{l_1=4\}}, \\
    X_{1,1,2,1}+X_{2,1,2,1} &= X_{1,2,1}-I_{\{l_1=3,l_2\neq 2\}}, \\
    X_{1,1,2,2}+X_{2,1,2,2} &= X_{1,2,2}-I_{\{l_1=3,l_2=2\}},\\
      X_{1,2,1,1}+X_{2,2,1,1} &= X_{2,1,1}-I_{\{l_1=2,l_2\geq 4\}}, \\
    X_{1,2,1,2}+X_{2,2,1,2} &= X_{2,1,2}-I_{\{l_1=2,l_2=3\}}, \\
    X_{1,2,2,1}+X_{2,2,2,1} &= X_{2,2,1}-I_{\{l_1=l_2=2,l_3\neq 2\}}, \\
    X_{1,2,2,2}+X_{2,2,2,2} &= X_{2,2,2}-I_{\{l_1=l_2=l_3=2\}}.\\
     X_{2,1,1,1}+X_{2,1,1,2} &= X_{2,1,1}-I_{\{l_m=4\}}, \\
    X_{2,1,2,1}+X_{2,1,2,2} &= X_{2,1,2}-I_{\{l_{m-1}=3,l_m=2\}}, \\
    X_{2,2,2,1}+X_{2,2,2,2} &= X_{2,2,2}-I_{\{l_{m-2}=l_{m-1}=l_{m-2}=2\}},\\
     X_{2,2,1}+X_{2,2,2} &= X_{2,2}-I_{\{l_{m-1}=l_{m}=2\}}.
\end{align*}

Therefore, using the above relations, together with the similar ones employed in Theorem \ref{t12} concerning $X_{i,j,k}$ and $X_{i,j}$, it all boils down to expressing in terms of segment lengths $X_{2,1,1,2}$, $X_{2,1,2,2}$, $X_{2,2,1,2}$, and $ZC2_5$. Finally, by performing some algebraic manupilations with the excess terms, the demonstration is completed.

\end{proof}

To complement the preceding theorems, Table \ref{table1} presents the values of the analyzed indices for the cases $n\leq6$.

\begin{table}[!htbp]
\begin{center}
\begin{tabular}{ |c|l|c|c| }
\hline
$n$ & Links & $ZC1(PC_n)$ & $ZC2(PC_n)$ \\
\hline
2 & & 24 & 28 \\ 
\hline
\multirow{2}{*}{3} & 1 & 52 & 68 \\ 
                   & 2 & 62 & 78 \\ 
\hline
\multirow{3}{*}{4} & (1,1) & 84 & 114\\ 
                   & (1,2) and (2,1) & 96 & 131 \\ 
                   & (2,2) & 108 & 144 \\ 
\hline
\multirow{6}{*}{5} & (1,1,1) & 116 & 162 \\ 
                   & (1,1,2) and (2,1,1) & 130  & 181 \\ 
                   & (1,2,1) & 132 & 188 \\ 
                   & (2,1,2) & 144 & 200 \\ 
                   & (1,2,2) and (2,2,1) & 144 & 202 \\ 
                   & (2,2,2) & 158 & 216 \\ 
\hline
\multirow{10}{*}{6} & (1,1,1,1) & 148 & 210 \\ 
                    & (1,1,1,2) and (2,1,1,1) & 162  & 230 \\ 
                    & (1,1,2,1) and (1,2,1,1) & 166 & 239 \\ 
                    & (2,1,1,2) & 176 & 251 \\ 
                    & (1,1,2,2) and (2,2,1,1) & 178 & 253 \\ 
                    & (1,2,1,2) and (2,1,2,1) & 180 & 259 \\ 
                    & (1,2,2,1) & 180 & 261 \\ 
                    & (2,1,2,2) and (2,2,1,2) & 192 & 274 \\ 
                    & (1,2,2,2) and (2,2,2,1) & 194 & 276 \\ 
                    & (2,2,2,2) & 208 & 291 \\ 
\hline
\end{tabular}
\end{center}
\caption{The values of $ZC1_n$ and $ZC2_n$ for $n \leq 6$.}
\label{table1}
\end{table}

\noindent
 Now, we will utilize the main results to compute $ZC1_n$ and $ZC2_{n}$ for certain types of polyomino chains studied in the literature \cite{bd35,bd34,bd33}.

\begin{corollary}\label{R1}
For the polyomino chain with $n$ squares and two segments $s_{1}$ and $s_{2}$ satisfying that $l_{1}=2$
and $l_{2}=n-1$, denoted as $PC_{n}^{1}$, we have the following:

$$ZC1_n=32n-30-2I_{\{n=4\}}, ~n \geq 4,$$
$$ZC2_n=48n-58-I_{\{n=5\}}, ~n \geq 5.$$

\end{corollary}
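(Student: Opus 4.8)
The plan is to observe that $PC_n^1$ is nothing but a polyomino chain with $m=2$ segments of lengths $l_1=2$ and $l_2=n-1$ (consistent with $l_1+l_2=n+m-1=n+1$; in link notation its code is $(L_3,\dots,L_n)=(2,1,\dots,1)$), and then to substitute this segment data directly into \refT{t12} and \refT{t13}. No fresh one-step analysis is needed: the corollary is a specialization. Since we are given $n\geq 4$ (resp.\ $n\geq 5$), the only segment length that can be ``small'' is $l_2$, and only at the borderline values $l_2=3$ when $n=4$ and $l_2=4$ when $n=5$; this is precisely the source of the corrective terms $I_{\{n=4\}}$ and $I_{\{n=5\}}$.

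For $ZC1_n$ I would evaluate each ingredient of \refT{t12} at $m=2$: the count $\gamma_1=I_{\{l_1=2\}}+I_{\{l_2=2\}}=1$ for every $n\geq 4$ (because $l_2=n-1\geq 3$); $\gamma_2=0$ since the index range $i=2,\dots,m-2$ is empty; $I_{\{l_1\geq 4\}}=0$; and $I_{\{l_m\geq 4\}}=I_{\{l_2\geq 4\}}=I_{\{n\geq 5\}}$. Plugging in gives $ZC1_n=2\bigl(16n+20-2+I_{\{n\geq 5\}}-34\bigr)=32n-32+2I_{\{n\geq 5\}}$, and rewriting $2I_{\{n\geq 5\}}=2-2I_{\{n=4\}}$ (valid on $n\geq 4$) yields the claimed $32n-30-2I_{\{n=4\}}$.

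For $ZC2_n$ I would likewise specialize \refT{t13} at $m=2$, $l_1=2$, $l_2=n-1$ with $n\geq 5$. Most terms drop out: either they involve an interior/boundary index forced to take a value that cannot occur, or they are sums over empty ranges, or they carry an $I_{\{m\geq 3\}}$ or $I_{\{m\geq 4\}}$ factor. Concretely $\gamma_1=1$, $\sum_{i=1,m}I_{\{l_i=2\}}=1$, $\sum_{i=1,m}I_{\{l_i=3\}}=0$, $\sum_{i=1,m}I_{\{l_i=4\}}=\gamma_4=I_{\{n=5\}}$, the triple-$2$ sum is $0$, $\gamma_3=0$, $\gamma_5 I_{\{m\geq 4\}}=0$, $I_{\{l_m=2,\,l_{m-1}\geq 4,\,m\geq 2\}}=0$, while $I_{\{l_1=2,\,l_2\geq 4,\,m\geq 2\}}=1$. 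Collecting the constant contributions $34\cdot 2-9-6+1-112=-58$ and the $I_{\{n=5\}}$ contributions $-2I_{\{n=5\}}+I_{\{n=5\}}=-I_{\{n=5\}}$ gives $ZC2_n=48n-58-I_{\{n=5\}}$.

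The argument is pure bookkeeping, so there is no substantive obstacle; the one point requiring care is exactly the handling of the borderline cases $n=4$ (for $ZC1$) and $n=5$ (for $ZC2$), where $l_2$ falls into $\{3,4\}$ and one must not carelessly apply the ``generic'' value of an indicator. As a final sanity check I would verify the two closed forms against the $n\leq 6$ rows of Table~\ref{table1} carrying link code $(2,1,\dots,1)$.
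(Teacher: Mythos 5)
Your proposal is correct and matches the paper's (implicit) argument: the corollary is stated as a direct specialization of Theorems~\ref{t12} and~\ref{t13} to $m=2$, $l_1=2$, $l_2=n-1$, and your bookkeeping of each term — including the borderline indicators $I_{\{l_2\geq 4\}}=I_{\{n\geq 5\}}$ and $I_{\{l_2=4\}}=I_{\{n=5\}}$ that produce the corrective terms — is exactly right. Both closed forms also check out against the $(2,1,\dots,1)$ rows of Table~\ref{table1}.
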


\begin{corollary}\label{R2}
For the polyomino chain with $n$ squares and  $m \geq 3$ segments $s_{1}, s_{2}, \dots, s_{m}$ satisfying that $l_{1}=l_{m}=2$
and $l_{2},\dots,l_{m-1}\geq 3$, denoted as $PC_{n}^{2}$, we have the following:

$$ZC1_n=32n+20m-76, ~n \geq 4,$$
$$ZC2_n=48n+34m-140-I_{\{l_2=3\}}-I_{\{l_{m-1}=3\}}+m_4, ~n \geq 5,$$
where $m_4$ denotes the number of segments of length equal to 4.
\end{corollary}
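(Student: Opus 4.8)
The plan is to obtain both identities by specializing Theorem~\ref{t12} and Theorem~\ref{t13} to the chain $PC_n^2$, whose segment sequence is characterized by $l_1=l_m=2$ and $l_2,\dots,l_{m-1}\ge 3$. As a preliminary remark, from $\sum_{i=1}^m l_i=n+m-1$ together with the lower bounds on the $l_i$ one gets $n=\bigl(\sum_i l_i\bigr)-m+1\ge 2+3(m-2)+2-m+1=2m-1\ge 5$, so the hypotheses $n\ge 4$ (for $ZC1$) and $n\ge 5$ (for $ZC2$) are automatically met and no small-$n$ bookkeeping is required.

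For $ZC1_n$ I would evaluate the three data-dependent quantities appearing in Theorem~\ref{t12}: only the two terminal segments have length $2$, so $\gamma_1=2$; since every interior segment has length at least $3$ there is no index $i\in\{2,\dots,m-2\}$ with $l_i=l_{i+1}=2$, hence $\gamma_2=0$; and $l_1=l_m=2$ forces $I_{\{l_1\ge 4\}}=I_{\{l_m\ge 4\}}=0$. Substituting these into the formula gives $ZC1_n=2(16n+10m-4-34)=32n+20m-76$.

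For $ZC2_n$ the same idea is applied term by term in Theorem~\ref{t13}. As above $\gamma_1=2$ and $\sum_{i\in\{1,m\}}I_{\{l_i=2\}}=2$, while $\sum_{i\in\{1,m\}}I_{\{l_i=3\}}=\sum_{i\in\{1,m\}}I_{\{l_i=4\}}=0$ because $l_1=l_m=2$; no three consecutive segments can all have length $2$, so the corresponding sum vanishes; since $l_2\ge 3$ one has $I_{\{l_1=2,l_2\ge 4\}}=1-I_{\{l_2=3\}}$ and symmetrically $I_{\{l_m=2,l_{m-1}\ge 4\}}=1-I_{\{l_{m-1}=3\}}$; for $\gamma_3$, whenever $2\le i\le m-2$ both $l_i$ and $l_{i+1}$ are interior segments, hence $\ne 2$, so the disjunct ``$l_i,l_{i+1}\ne 2$'' holds and every summand equals $1$, giving $\gamma_3=m-3$ (read as $0$ when $m=3$); all length-$4$ segments are interior, so $\gamma_4=m_4$; and the sums $\sum_{i=3}^{m-2}I_{\{l_i=2\}}$, $I_{\{l_2=2\}}$, $I_{\{l_{m-1}=2\}}$ all vanish, so $\gamma_5=3-m$ and one checks $\gamma_5 I_{\{m\ge 4\}}=3-m$ for every $m\ge 3$ (both sides are $0$ at $m=3$). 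Plugging all of this in, the terms linear in $m$ collapse, $34m+(m-3)+(3-m)=34m$, the constants sum to $-140$, and the only surviving structure-dependent terms are $-I_{\{l_2=3\}}-I_{\{l_{m-1}=3\}}+m_4$, which is the asserted expression.

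The work is essentially mechanical; the only places requiring attention are the correct reading of the logical condition in the definition of $\gamma_3$ and the verification that all index ranges and indicators degenerate correctly in the boundary case $m=3$, where $l_2$ and $l_{m-1}$ denote the same segment and several of the summation ranges become empty.
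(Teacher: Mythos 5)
Your proposal is correct and follows exactly the route the paper intends: the corollary is stated as a direct consequence of Theorems~\ref{t12} and~\ref{t13}, and your term-by-term evaluation of $\gamma_1,\gamma_2,\gamma_3,\gamma_4,\gamma_5$ and the boundary indicators (including the degenerate case $m=3$) reproduces both formulas. Nothing further is needed.
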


\noindent
On the other hand, we denote a polyomino chain of dimension $n\geq 1$ with $k = k_{1}+ k_{2} + k_{3}$, where $k_{1}$ is the number of kinks, $k_{2}$ is the number of medials, and $k_{3}$ is the number of terminals in a unit of polyomino chain, by $PC_{n,k}$. Figure \ref{f3} illustrates a general representation of a polyomino chain $PC_{n,k}$. Let $k\geq 3$, by definition of $PC_{n,k}$, we have: $m=2n,$ $\gamma=n-1$, $I_{\{l_m=2\}}=1$, and $I_{\{l_1=2\}}=I_{\{k=3\}}$. Hence, in the following corollary, we will compute $ZC1(PC_{n,k})$ and $ZC2(PC_{n,k})$ for $k\geq 3$ using Theorems \ref{t12} and \ref{t13}.

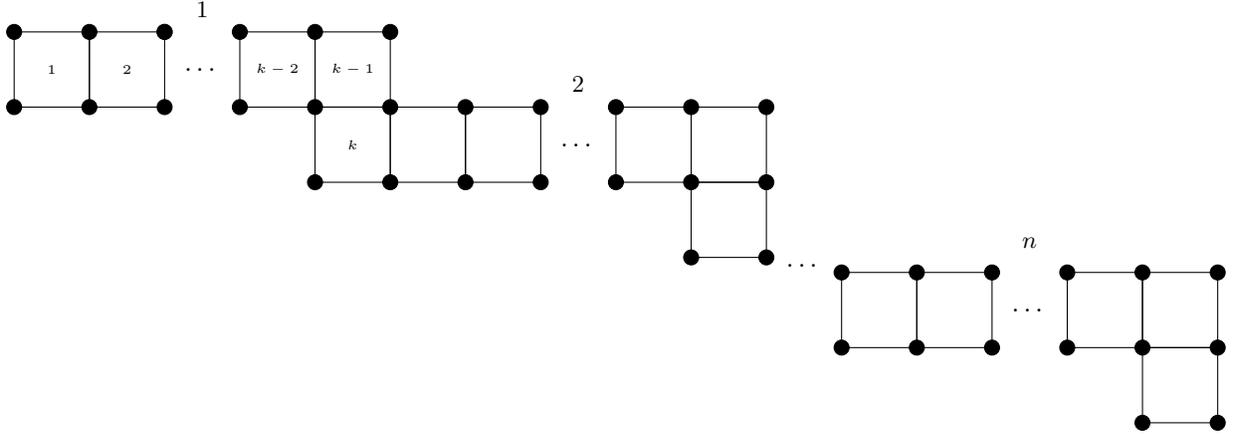
\begin{figure}[h!]
    \centering
   
\begin{tikzpicture}

\draw (0,0) rectangle (1,1);
\draw (1,0) rectangle (2,1);
\draw (4-1,0) rectangle (5-1,1);
\draw (6-2,0) rectangle (7-2,1);
\draw (6-2,-1) rectangle (7-2,0);
\draw (7-2,-1) rectangle (8-2,0);
\draw (8-2,-1) rectangle (9-2,0);
\draw (11-3,-1) rectangle (12-3,0);
\draw (12-3,-1) rectangle (13-3,0);
\draw (9,-2) rectangle (13-3,-1);

\draw (11,-3.2) rectangle (12,-2.2);
\draw (12,-3.2) rectangle (13,-2.2);
\draw (14,-3.2) rectangle (15,-2.2);

\draw (15,-3.2) rectangle (16,-2.2);
\draw (15,-4.2) rectangle (16,-3.2);

\node[fill=black, draw, circle, inner sep=2pt] (n1) at (5,1) {};
\node[fill=black, draw, circle, inner sep=2pt] (n1) at (5,0) {};
\node[fill=black, draw, circle, inner sep=2pt] (n1) at (6,-1) {};
\node[fill=black, draw, circle, inner sep=2pt] (n1) at (6,0) {};
\node[fill=black, draw, circle, inner sep=2pt] (n1) at (7,-1) {};
\node[fill=black, draw, circle, inner sep=2pt] (n1) at (7,0) {};

\node[fill=black, draw, circle, inner sep=2pt] (n1) at (8,-1) {};
\node[fill=black, draw, circle, inner sep=2pt] (n1) at (8,0) {};
\node[fill=black, draw, circle, inner sep=2pt] (n1) at (9,-1) {};
\node[fill=black, draw, circle, inner sep=2pt] (n1) at (9,0) {};
\node[fill=black, draw, circle, inner sep=2pt] (n1) at (10,-1) {};
\node[fill=black, draw, circle, inner sep=2pt] (n1) at (10,0) {};
\node[fill=black, draw, circle, inner sep=2pt] (n1) at (9,-2) {};
\node[fill=black, draw, circle, inner sep=2pt] (n1) at (10,-2) {};

\node[fill=black, draw, circle, inner sep=2pt] (n1) at (11,-3.2) {};
\node[fill=black, draw, circle, inner sep=2pt] (n1) at (11,-2.2) {};

\node[fill=black, draw, circle, inner sep=2pt] (n1) at (12,-3.2) {};
\node[fill=black, draw, circle, inner sep=2pt] (n1) at (12,-2.2) {};

\node[fill=black, draw, circle, inner sep=2pt] (n1) at (13,-3.2) {};
\node[fill=black, draw, circle, inner sep=2pt] (n1) at (13,-2.2) {};

\node[fill=black, draw, circle, inner sep=2pt] (n1) at (14,-3.2) {};
\node[fill=black, draw, circle, inner sep=2pt] (n1) at (14,-2.2) {};

\node[fill=black, draw, circle, inner sep=2pt] (n1) at (15,-3.2) {};
\node[fill=black, draw, circle, inner sep=2pt] (n1) at (15,-2.2) {};

\node[fill=black, draw, circle, inner sep=2pt] (n1) at (16,-3.2) {};
\node[fill=black, draw, circle, inner sep=2pt] (n1) at (16,-2.2) {};

\node[fill=black, draw, circle, inner sep=2pt] (n1) at (15,-4.2) {};
\node[fill=black, draw, circle, inner sep=2pt] (n1) at (16,-4.2) {};

\node[fill=black, draw, circle, inner sep=2pt] (n1) at (0,0) {};
\node[fill=black, draw, circle, inner sep=2pt] (n2) at (1,0) {};
\node[fill=black, draw, circle, inner sep=2pt] (n3) at (0,1) {};
\node[fill=black, draw, circle, inner sep=2pt] (n4) at (1,1) {};
\node[fill=black, draw, circle, inner sep=2pt] (n5) at (2,0) {};

\node[fill=black, draw, circle, inner sep=2pt] (n7) at (2,1) {};
\node[fill=black, draw, circle, inner sep=2pt] (n7) at (3,0) {};
\node[fill=black, draw, circle, inner sep=2pt] (n7) at (3,1) {};
\node[fill=black, draw, circle, inner sep=2pt] (n7) at (4,0) {};
\node[fill=black, draw, circle, inner sep=2pt] (n7) at (4,1) {};

\node[fill=black, draw, circle, inner sep=2pt] (n7) at (4,-1) {};

 \node[fill=black, draw, circle, inner sep=2pt] (n7) at (2,1) {};
\node[fill=black, draw, circle, inner sep=2pt] (n7) at (5,-1) {};

\node at (0.5,0.5) {\tiny $1$};
\node at (1.5,0.5) {\tiny $2$};
\node at (3.5,0.5) {\tiny $k-2$};
\node at (4.5,0.5) {\tiny $k-1$};
\node at (4.5,-0.5) {\tiny $k$};

\node at (2.5,0.5) {$\dots$};
\node at (2.5,1.3) {\small $1$};
\node at (7.5,-0.5) {$\dots$};
\node at (7.5,0.3) {\small $2$};
\node at (10.5,-2.1) {$\dots$};
\node at (13.5,-2.7) {$\dots$};
\node at (13.5,-1.8) {\small $n$};

\end{tikzpicture}

   \caption{General representation of $PC_{n,k}$.}
    \label{f3}
\end{figure}
\begin{remark}
  Note that, by definition $PC_{n,1}=Li_{n}$ and $PC_{n,2}=Z_{2n}$.
 \end{remark}
\begin{corollary}\label{R5}
 
Given $k \geq 3$, $n\geq 2$, it follows that

  $$ZC1(PC_{n,k})=n(32k+36)-66-6I_{\{k=3\}}-2I_{\{k=4\}}.$$
   $$ZC2(PC_{n,k})=n(48k+59)-117+(2n-19)I_{\{k=3\}}+(n-5)I_{\{k=4\}}-I_{\{k=5\}}.$$

\end{corollary}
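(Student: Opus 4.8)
The plan is to specialize Theorems~\ref{t12} and~\ref{t13} to the family $PC_{n,k}$, using the structural parameters already recorded in the paragraph preceding the corollary. First I would note that each unit of $PC_{n,k}$ contributes exactly $k-1$ squares along a linear run followed by one kink, so the whole chain $PC_{n,k}$ has $n$ units, and the segment sequence decomposes accordingly: the first segment has length $l_1 = k-1$ (the terminal at the start, then $k-2$ medials, then the kink), every internal segment has length exactly $2$, and the last segment has length $2$. This is why $m = 2n$, $\gamma_1 = \sum_i I_{\{l_i=2\}} = n-1$ (all the length-$2$ segments, which are segments $s_2,\dots,s_{2n}$ except... one needs to count carefully: there are $2n$ segments, of which the first has length $k-1\geq 2$ and the remaining $2n-1$ alternate; the stated $\gamma = n-1$ must be re-derived from the explicit unit structure), $I_{\{l_m=2\}}=1$ always, and $I_{\{l_1=2\}} = I_{\{k=3\}}$ since $l_1 = k-1$ equals $2$ iff $k=3$. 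I would lay out this segment-length sequence explicitly for a general unit of $PC_{n,k}$ as the first step, since every subsequent substitution depends on it.

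Next, for the $ZC1$ formula, I would substitute into Theorem~\ref{t12} the values $m = 2n$, $n \mapsto$ (the number of squares, which for $PC_{n,k}$ is $n(k-1)+1$ — careful: the theorem's ``$n$'' is the total square count, not the unit count, so a change of variable is needed here), $\gamma_1$, $\gamma_2$, $I_{\{l_1\geq 4\}} = I_{\{k-1\geq 4\}} = I_{\{k\geq 5\}}$, and $I_{\{l_m\geq 4\}} = 0$ since $l_m = 2$. The term $\gamma_2 = \sum_{i=2}^{m-2} I_{\{l_i = l_{i+1}=2\}}$ counts consecutive pairs of length-$2$ segments among the internal ones; from the explicit unit structure one reads off how many such adjacent pairs occur per unit and at unit boundaries. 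I would then collect terms into the claimed $n(32k+36) - 66 - 6I_{\{k=3\}} - 2I_{\{k=4\}}$, and separately verify the low cases where the correction indicators $I_{\{k=3\}}, I_{\{k=4\}}$ fire — these arise precisely because when $k=3$ we have $l_1 = 2$ (so $I_{\{l_1\geq 4\}}$ vanishes but also the $I_{\{l_1=2\}}$-type boundary effects kick in), and when $k=4$ we have $l_1 = 3$, again a boundary case distinct from the generic $k\geq 5$.

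For the $ZC2$ formula the same substitution is carried out into the (considerably longer) Theorem~\ref{t13}: one evaluates $\gamma_1 = n-1$, the boundary sums $\sum_{i=1,m} I_{\{l_i=2\}} = I_{\{k=3\}} + 1$, $\sum_{i=1,m} I_{\{l_i=3\}} = I_{\{k=4\}}$, $\sum_{i=1,m} I_{\{l_i=4\}} = I_{\{k=5\}}$, the triple-consecutive-$2$ count $\sum_{i=1}^{m-2} I_{\{l_i=l_{i+1}=l_{i+2}=2\}}$, the two $I_{\{l_1=2,l_2\geq 4\}}$-type terms (which vanish unless $k=3$ and simultaneously the neighbouring length is $\geq 4$, i.e.\ never, since internal segments have length $2$), and finally $\gamma_3,\gamma_4,\gamma_5$. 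Here $\gamma_3$ needs care: for each internal index $i$ with $2\leq i\leq m-2$ one checks whether $l_{i+1}=3$ or $l_i=l_{i+1}=2$ or ($l_i,l_{i+1}\neq 2$) or $l_i=3$; with the explicit sequence (mostly $2$'s, with $l_1=k-1$) this indicator is $1$ for essentially all internal $i$ except near the start, giving a linear-in-$n$ contribution plus a $k$-dependent constant. I would then combine everything and match to $n(48k+59) - 117 + (2n-19)I_{\{k=3\}} + (n-5)I_{\{k=4\}} - I_{\{k=5\}}$.

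The main obstacle I expect is bookkeeping rather than conceptual: correctly translating between the two notions of ``$n$'' (unit count vs.\ square count, related by $n_{\text{squares}} = n(k-1)+1$ — or whatever the precise relation is given the kink overlap), and correctly evaluating the composite indicator-sums $\gamma_2, \gamma_3, \gamma_5$ and the triple-consecutive-$2$ sum on the explicit alternating segment sequence, including the edge effects at the two ends of the chain that produce exactly the $I_{\{k=3\}}, I_{\{k=4\}}, I_{\{k=5\}}$ correction terms. A clean way to control this is to first handle the generic regime $k\geq 5$ (where all the special indicators vanish), obtain the leading $n(32k+36)-66$ and $n(48k+59)-117$, and then separately audit $k=3$ and $k=4$ (and $k=5$ for $ZC2$) by direct re-substitution, cross-checking against the small-$n$ values in Table~\ref{table1} and against Corollaries~\ref{R1} and~\ref{R2} where the families overlap. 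I would also sanity-check $n=1$ or $n=2$ against the $Li$ and $Z$ remark to catch sign errors.
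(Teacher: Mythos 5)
Your overall strategy --- read off the segment data of $PC_{n,k}$ and substitute it into Theorems \ref{t12} and \ref{t13} --- is exactly what the paper does (the paper gives no detail beyond listing a few structural constants, so the entire content of the proof is getting the unit-to-segment translation right). The problem is that your structural description of $PC_{n,k}$ is wrong, and not in a way a later audit of small indicators would fix, because it changes the leading term. The segment sequence is \emph{not} ``$l_1=k-1$ followed by all $2$'s'': each of the $n$ units contributes one long segment and one short one, so the $m=2n$ segments alternate as
$$(l_1,l_2,\dots,l_{2n})=(k-1,\;2,\;k,\;2,\;k,\;2,\;\dots,\;k,\;2),$$
and the total number of squares is $N=nk$, not $n(k-1)+1$ (check: $\sum_i l_i=(k-1)+2+(n-1)(k+2)=nk+2n-1=N+m-1$). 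Under your description the square count grows like $k+2n$, so the $nk$ term in $n(32k+36)$ could never arise. As a concrete test, $(n,k)=(2,3)$ gives a $6$-square chain with link sequence $(2,2,1,2)$, for which Table \ref{table1} gives $ZC1=192$ and $ZC2=274$, exactly matching the corollary; your square count would give $5$ squares, for which the maximum possible value of $ZC1$ is $158<192$.

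The wrong segment sequence also corrupts several of the indicator sums you propose to evaluate. With the correct sequence, $\gamma_1=n+I_{\{k=3\}}$ (not $n-1$; the paper's own ``$\gamma=n-1$'' cannot be $\gamma_1$ either), $\gamma_2=0$, and --- most importantly --- $I_{\{l_m=2,\,l_{m-1}\geq 4,\,m\geq 2\}}=I_{\{k\geq 4\}}$ for $n\geq 2$, since $l_{m-1}=l_{2n-1}=k$; it does \emph{not} vanish, contrary to your claim that both boundary terms of this type are never active. Likewise $\gamma_4=(n-1)I_{\{k=4\}}+I_{\{k=5\}}$ and $\gamma_3,\gamma_5$ must be recomputed on the alternating sequence (one finds $\gamma_5=0$). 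Your identifications $m=2n$, $I_{\{l_1=2\}}=I_{\{k=3\}}$, $I_{\{l_m=2\}}=1$, $I_{\{l_1\geq 4\}}=I_{\{k\geq 5\}}$ are correct, and your plan of treating $k\geq 5$ generically and then auditing $k=3,4,5$ against Table \ref{table1} is sensible, but the computation cannot go through until the segment decomposition is corrected.
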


\noindent
By definition, if $PC_{n}=Li_{n}$, we deduce that $m=1$ and $l_{1}=n$. Similarly, if $PC_{n}=Z_{n}$, then $m=n-1$ and $l_{i}=2$ for $i=1,2,\dots,m$. Therefore, the following results can be obtained by applying the previous theorems.

\begin{proposition}
    \label{c1}
Let $Li_{n}$ and $Z_{n}$ be linear and zigzag chains, respectively, with $n \geq 3$ squares. Then

$$ZC1(Li_n) \leq  ZC1(PC_n) \leq ZC1(Z_n).$$
Equality is attained if and only if $PC_n=Li_n$ and $PC_n= Z_n$, respectively. Moreover, $ZC1(Li_n)=32n-44$ for $n \geq 3$ and $ZC1(Z_n)=50n-92$ for $n \geq 4$.

\end{proposition}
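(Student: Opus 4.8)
The plan is to read off the two closed‑form values by substituting into Theorem~\ref{t12}, and then to obtain the extremal inequalities by iterating the one‑step recursion~(\ref{recursiveZ1}) established in its proof while bounding each increment. First, for $Li_n$ with $n\ge 4$ we have $m=1$ and $l_1=n$, so $\gamma_1=I_{\{l_1=2\}}=0$, $\gamma_2=0$ (its index range $2\le i\le m-2$ is empty), and $I_{\{l_1\ge 4\}}=I_{\{l_m\ge 4\}}=1$; Theorem~\ref{t12} then gives $ZC1(Li_n)=2(16n+10+2-34)=32n-44$. For $Z_n$ with $n\ge 4$ we have $m=n-1$ and every $l_i=2$, so $\gamma_1=n-1$, $\gamma_2=\#\{i:2\le i\le m-2\}=n-4$, and the two boundary indicators vanish; hence $ZC1(Z_n)=2\bigl(16n+10(n-1)-2(n-1)+(n-4)-34\bigr)=50n-92$. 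For $n=3$ one reads $ZC1(Li_3)=52=32\cdot 3-44$ off Table~\ref{table1} (and $ZC1(Z_3)=62$, which is precisely why the zigzag formula is claimed only for $n\ge 4$).

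For the inequalities I would argue as follows. By~(\ref{szc}) (equivalently, by iterating~(\ref{recursiveZ1})), for $n\ge 5$ one has $ZC1_n=ZC1_4+\sum_{w=5}^{n} g_1(L_{w-2},L_{w-1},L_w)$, where $g_1$ is the function displayed in the proof of Theorem~\ref{t12}. A short inspection of the three binary arguments of $g_1$ shows $32\le g_1\le 50$, with $g_1=32$ occurring \emph{only} for $(L_{w-2},L_{w-1},L_w)=(1,1,1)$ and $g_1=50$ \emph{only} for $(2,2,2)$; likewise Table~\ref{table1} gives $ZC1_4\in\{84,96,108\}$, attaining $84$ only for links $(L_3,L_4)=(1,1)$ and $108$ only for $(L_3,L_4)=(2,2)$. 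Adding the lower bounds yields $ZC1_n\ge 84+32(n-4)=32n-44=ZC1(Li_n)$, and adding the upper bounds yields $ZC1_n\le 108+50(n-4)=50n-92=ZC1(Z_n)$. Equality in the lower bound forces $(L_3,L_4)=(1,1)$ and every increment minimal, i.e.\ $L_w=1$ for all $w\in\{3,\dots,n\}$, i.e.\ (by the description recalled just before the proposition) $PC_n=Li_n$; symmetrically, equality in the upper bound holds iff $PC_n=Z_n$. The cases $n=3$ and $n=4$ are checked directly from Table~\ref{table1}, where among all listed link patterns the minimum is realised exactly by $Li_n$ and the maximum exactly by $Z_n$.

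The only genuinely delicate part is the \emph{uniqueness} underlying the equality statement: one must verify that any deviation of $(L_3,\dots,L_n)$ from a constant sequence strictly raises (for the lower bound) or strictly lowers (for the upper bound) at least one summand — either $ZC1_4$ or some $g_1$ — so that $PC_n\neq Li_n$ (resp.\ $PC_n\neq Z_n$) yields a strict inequality; this reduces to the elementary case analysis of $g_1$ and of $ZC1_4$ already used (note that $g_1>32$ whenever its third argument equals $2$, and $ZC1_4>84$ whenever $(L_3,L_4)\neq(1,1)$, and dually). A minor bookkeeping point worth flagging is that $\gamma_2(Z_n)=n-4$ and not $n-3$, since the summation index in $\gamma_2$ runs only over $2\le i\le m-2$. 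An alternative route would be to maximise and minimise $10m-2\gamma_1+\gamma_2+I_{\{l_1\ge 4\}}+I_{\{l_m\ge 4\}}$ over all admissible segment‑length vectors for fixed $n$ directly from Theorem~\ref{t12}; this works as well, but forces one to disentangle the interaction among $\gamma_1$, $\gamma_2$ and the boundary indicators, which the recursion cleanly sidesteps.
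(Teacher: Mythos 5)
Your proposal is correct and follows essentially the same route as the paper: the paper proves the inequalities by induction on $n$ using the recursion $ZC1_n=ZC1_{n-1}+g_1(L_{n-2},L_{n-1},L_n)$ together with the bounds $g_1(1,1,1)=32\le g_1\le 50=g_1(2,2,2)$ and the base cases from Table~\ref{table1}, which is just the inductive form of your telescoped sum, and your equality analysis matches theirs. Your explicit substitution into Theorem~\ref{t12} to recover $32n-44$ and $50n-92$ (including the correct $\gamma_2(Z_n)=n-4$) is a sound supplement to what the paper leaves unstated.
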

\begin{proof}
    The proof will proceed by induction on $n$. For $n \leq 6$, verification will be conducted based on Table \ref{table1}. Now, assuming that the assertion holds for $n-1$, i.e.,
    $$ZC1(Li_{n-1}) \leq ZC1_{n-1} \leq ZC1(Z_{n-1}),$$ and that equality is reached if and only if $PC_{n-1}=Li_{n-1}, Z_{n-1}$, respectively. By utilizing the recursive formula provided in Theorem \ref{t12}, for $n \geq 6$, we can assert that $ZC1_n=ZC1_{n-1}+g_1(L_{n-2},L_{n-1},L_n)$. Then, it can be verified that
   $g_1(1,1,1)=32\leq g_1(L_{n-2},L_{n-1},L_n) \leq 50=g_1(2,2,2)$.
   Moreover, the equality holds if and only if $L_{n-2}=L_{n-1}=L_n=1$ and $L_{n-2}=L_{n-1}=L_n=2$, respectively. Thus, 
    $$ZC1(Li_{n-1})+32=ZC1(Li_{n})\leq ZC1_n \leq ZC1(Z_{n})=ZC1(Z_{n-1})+50.$$
    Concluding the proof.
   
\end{proof}
Simply by working with $g_2(L_{n-3},L_{n-2},L_{n-1},L_n)$ as it appears in the recursive formula presented in Theorem \ref{t13}, a similar line of reasoning allows us to prove the following proposition.

\begin{proposition}\label{c2}

Let $Li_{n}$ and $Z_{n}$ be linear and zigzag chains, respectively, with $n \geq 3$ squares. Then

$$ZC2(Li_n) \leq  ZC2(PC_n) \leq ZC2(Z_n).$$
Equality is attained if and only if $PC_n=Li_n$ and $PC_n=Z_n$, respectively. Moreover, $ZC2(Li_n)=48n-78$ for $n \geq 4$ and $ZC2(Z_n)=75n-159$ for $n \geq 5$. 

\end{proposition}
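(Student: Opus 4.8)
\textbf{Proof plan for Proposition~\ref{c2}.}
The plan is to mimic exactly the induction used for Proposition~\ref{c1}, but now driven by the four-argument increment $g_2(L_{n-3},L_{n-2},L_{n-1},L_n)$ from the recursive formula~(\ref{recursive-Z2}) in Theorem~\ref{t13}. First I would dispose of the base cases: for $n\le 6$ the values of $ZC2(PC_n)$ are tabulated in Table~\ref{table1}, and a direct inspection there shows that within each block of fixed $n$ the all-$1$ link sequence gives the smallest value and the all-$2$ link sequence the largest, with the stated closed forms $ZC2(Li_n)=48n-78$ and $ZC2(Z_n)=75n-159$ matching the entries (for $n$ in the admissible ranges $n\ge4$ and $n\ge5$ respectively); this also settles the uniqueness claim in the base range since the table lists all link sequences.

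Next, for the inductive step with $n\ge 7$, assume $ZC2(Li_{n-1})\le ZC2_{n-1}\le ZC2(Z_{n-1})$ with equality characterised by the constant link sequences. The key computation is to extremise $g_2$ over $(L_{n-3},L_{n-2},L_{n-1},L_n)\in\{1,2\}^4$. Reading off the expression for $g_2$, when all four arguments equal $1$ every indicator of the form $I_{\{\cdots=2\}}$ vanishes while the indicators $I_{\{L_{n-3}=L_{n-2}=1\}}$, $I_{\{L_{n-2}=L_{n-1}=1\}}$ are $1$, giving $g_2(1,1,1,1)=51-1-2=48$; when all four equal $2$ the indicators $I_{\{L_{n-1}=2\}}$, $I_{\{L_n=2\}}$, $I_{\{L_{n-2}=1,L_{n-1}=2\}}^{c}$ behaviour gives $g_2(2,2,2,2)=51+7+17=75$ (the ``both $1$'' and ``$1$ then $2$'' indicators all vanish). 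So I would verify, by checking the remaining $14$ sign patterns (or better, by grouping terms), that $48\le g_2(L_{n-3},L_{n-2},L_{n-1},L_n)\le 75$ with the lower bound attained only at $(1,1,1,1)$ and the upper bound only at $(2,2,2,2)$. Combining with the inductive hypothesis,
\begin{align*}
ZC2(Li_{n-1})+48=ZC2(Li_n)\le ZC2_n\le ZC2(Z_n)=ZC2(Z_{n-1})+75,
\end{align*}
and the equality analysis propagates: the lower (resp.\ upper) bound on $ZC2_n$ forces both $ZC2_{n-1}$ to be extremal (hence $PC_{n-1}$ constant by hypothesis) and the last link triple to be $1,1,1$ (resp.\ $2,2,2$), i.e.\ $PC_n=Li_n$ (resp.\ $PC_n=Z_n$).

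The main obstacle is the strictness of the extremum of $g_2$: unlike $g_1$, the function $g_2$ has cross terms such as $I_{\{L_{n-1}=2,L_n=1\}}$ and $I_{\{L_{n-1}=1,L_n=2\}}$ that could in principle let a non-constant pattern tie the extreme value, so I would need to argue carefully that no mixed pattern reaches $48$ or $75$ — most cleanly by observing that any occurrence of a ``$2$'' strictly raises the running total above $48$ (each isolated or paired $2$ contributes a net positive even after subtracting the relevant negative indicators) and, symmetrically, any occurrence of a ``$1$'' among the last four strictly lowers it below $75$. One subtlety to flag is the boundary between the table-based base cases and the recursion: the recursion for $ZC2$ is valid for $n\ge 6$ (as stated in~(\ref{recursive-Z2})), so the induction can be started from $n=6$, and the constant-sequence closed forms should be checked to be consistent with $g_2(1,1,1,1)=48$ and $g_2(2,2,2,2)=75$ at that first recursive step, which is exactly the arithmetic $48\cdot6-78=210$, $210-48=162=48\cdot5-78$ and $75\cdot6-159=291$, $291-75=216=75\cdot5-159$, matching Table~\ref{table1}.
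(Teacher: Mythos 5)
Your plan is correct and is exactly the argument the paper intends: it proves Proposition~\ref{c2} by the same induction as Proposition~\ref{c1}, using the recursion of Theorem~\ref{t13} and the fact that $g_2$ attains its minimum $48$ only at $(1,1,1,1)$ and its maximum $75$ only at $(2,2,2,2)$ (a direct check of the $16$ cases confirms all other values lie in $[49,74]$). Your base-case verification against Table~\ref{table1} and the equality propagation are likewise as in the paper.
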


\begin{remark}\label{re1}
   This remark aims to underscore certain implications of Theorem \ref{t12} and Theorem \ref{t13}, which are derived similarly through the recursive formula and properties of the functions $g_1$ and $g_2$.
\begin{itemize}
\item[(a)] Given that, for a fixed $PC_n$, $g_1(L_{n-2},L_{n-1},L_n) < g_2(L_{n-3},L_{n-2},L_{n-1},L_n)$, it follows that for $n \geq 2$, $ZC1_n < ZC2_n$. 
        \item[(b)]  Because of the sequence of values of the function $g_1(g_2)$, substituting a  link of type 1 with a link of type 2 in a fixed $PC_n$ leads to an increase in the index. After this, one might think that this property suggests that when the number of links equal to two ($X_2$) increases then the index increases. However, this is generally not true, see the following examples written with the notation introduced in the paragraph below: $$ZCi(PC(2,2,2,1,1,1,1,1,1,2,2)) <  ZCi(PC(1,1,2,1,2,1,2,1,2,1,1)),~i=1,2$$

        $$ZC1(PC(1,2,1,2,1,2,1,2,1,1))=ZC1(PC(2,2,1,1,1,1,1,2,2,2))$$

        and

        $$ZC2(PC(1,2,1,1,1,2,1,1,1,1,1,1))=ZC2(PC(2,2,1,1,1,1,1,1,1,1,1,2)).$$
    \end{itemize}
\end{remark}

The aforementioned propositions indicate where the maximum and minimum are attained. Furthermore,  Remark \ref{re1} (b) enables us to compare certain types of chains. Similarly, the subsequent results provide insight into how the indices behave beyond the extremes. To facilitate the interpretation of the results, we will adopt the following notation: $PC(L_3, L_4, \dots, L_n)$ represents a polyomino chain with $n$ squares, with the links chosen at each stage according to the vector $v=(L_3, L_4, \dots, L_n)$. In the forthcoming two results, we will primarily focus on polyomino chains containing either one link of type 1 or one link of type 2 and the number of links preceding either the link of type 1 or the link of type 2 will be greater than or equal to $ \lfloor (n-2)/2 \rfloor$.

\begin{proposition}\label{p1d}
   Given $PC_n$ with $n \geq 5$ squares. Then
   $$ZC1(Li_n) < ZC1(PC(1, 1, \dots ,1, 2))< ZC1(PC(1, 1, \dots, 1,2, 1))$$
   $$< ZC1(PC(1, 1, \dots, 1,2,1, 1))= ZC1(PC(\underbrace{1, 1, \dots, 1}_{i},2,1, \dots, 1,1)),$$
and 
$$ZC1(Z_n) > ZC1(PC(2, 2, \dots ,2, 1)) >ZC1(PC(2, 2, \dots, 2,1, 2))$$
 $$> ZC1(PC(2, 2, \dots, 2,1,2, 2))= ZC1(PC(\underbrace{2, 2, \dots, 2}_{i},1,2, \dots, 2,2))$$
with $\lfloor (n-2)/2 \rfloor \leq i \leq n-6$. Moreover, for $n \geq 6$ 
$$ZC1(PC(1, 1, \dots, 1,2,1, 1)) < ZC1(PC(2, 1, \dots, 1, 2)),$$
and

$$ZC1(PC(2, 2, \dots, 2,1,2, 2)) >ZC1(PC(1, 2, \dots, 2, 1)).$$
\end{proposition}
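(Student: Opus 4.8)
The plan is to treat all the inequalities in Proposition~\ref{p1d} as consequences of the recursive formula (\ref{recursiveZ1}) together with the closed form in Theorem~\ref{t12}, handling the two halves (the ``$1$-dominated'' chains and the ``$2$-dominated'' chains) by the obvious symmetry of swapping link types. First I would fix notation for the specific chains appearing in the statement: each is of the form $PC(1,\dots,1,2,1,\dots,1)$ or $PC(2,\dots,2,1,2,\dots,2)$, so it is determined by $n$ and by the position of the single ``odd'' link. For such chains the segment structure is immediate: a single type-$2$ link inside a run of type-$1$ links produces $m=2$ segments whose lengths depend only on where the $2$ sits, and similarly a single type-$1$ link among type-$2$ links produces $m=n-2$ segments with exactly one segment of length $3$ (or $4$, or the boundary cases $l_1=2$, $l_m=2$) and all others of length $2$. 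I would then simply evaluate the formula of Theorem~\ref{t12} on each of these configurations.

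The key computational point is that in $ZC1$, by Theorem~\ref{t12}, the dependence on the chain is only through $m$, $\gamma_1=\sum I_{\{l_i=2\}}$, $\gamma_2=\sum_{i=2}^{m-2} I_{\{l_i=l_{i+1}=2\}}$, and the two boundary indicators $I_{\{l_1\ge 4\}}$, $I_{\{l_m\ge 4\}}$. For the $1$-dominated chains all of these are small constants once $n\ge 6$: $m=2$, $\gamma_1\in\{0,1\}$, $\gamma_2=0$, and $I_{\{l_1\ge4\}}+I_{\{l_m\ge4\}}$ detects whether the lone $2$ is deep in the interior. Sliding the $2$ from the very end ($PC(1,\dots,1,2)$, where one segment has length $2$ so $\gamma_1=1$) one step inward to $PC(1,\dots,1,2,1)$ removes that length-$2$ segment (so $\gamma_1$ drops to $0$, i.e.\ $-2\gamma_1$ increases by $2$) and then, once the $2$ is at distance $\ge 2$ from each end, both $l_1\ge 4$ and $l_m\ge 4$ hold and further sliding changes nothing --- this is exactly the chain of strict inequalities followed by the equality $ZC1(PC(\underbrace{1,\dots,1}_{i},2,1,\dots,1,1))$ being independent of $i$ for $\lfloor(n-2)/2\rfloor\le i\le n-6$. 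The first strict inequality $ZC1(Li_n)<ZC1(PC(1,\dots,1,2))$ is just Proposition~\ref{c1} (or a one-line evaluation). The $2$-dominated chain of inequalities is the mirror image: here $m=n-2$ and $\gamma_1$ counts the length-$2$ segments, so moving the unique $1$ inward toggles whether an adjacent segment has length $3$ versus length $2$, changing $\gamma_1$ and $\gamma_2$ in a controlled way; the signs in $2(\dots-2\gamma_1+\gamma_2+\dots)$ make the index strictly decrease with each inward step and then stabilize, and the top inequality $ZC1(Z_n)>ZC1(PC(2,\dots,2,1))$ is Proposition~\ref{c1} again.

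For the last two cross comparisons --- $ZC1(PC(1,\dots,1,2,1,1))<ZC1(PC(2,1,\dots,1,2))$ and its mirror --- I would again just read off both sides from Theorem~\ref{t12}. The left chain has $m=2$ with a deep interior $2$ (so $\gamma_1=0$, $I_{\{l_1\ge4\}}=I_{\{l_m\ge4\}}=1$), while $PC(2,1,\dots,1,2)$ has $m=3$ segments, with $l_1=2$ and $l_m=2$ (so $\gamma_1=2$) but $\gamma_2=0$ and the middle segment long; plugging into $2(16n+10m-2\gamma_1+\gamma_2+I_{\{l_1\ge4\}}+I_{\{l_m\ge4\}}-34)$ the left gives $2(16n+20-0+0+1+1-34)=2(16n-12)$ and the right $2(16n+30-4+0+0+0-34)=2(16n-8)$, so the strict inequality holds (with the displayed constant gap) for all $n\ge6$; the $2$-dominated cross comparison is handled symmetrically. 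The main obstacle --- and it is really the only nontrivial part --- is getting the segment-length bookkeeping exactly right in the boundary cases: when the lone odd link is one or two steps from an end, the values of $l_1,l_2,l_{m-1},l_m$, and hence of $\gamma_1$, $\gamma_2$ and the $I_{\{l_i\ge4\}}$ terms, are fiddly, and one must also check that the hypotheses $n\ge5$ (resp.\ $n\ge6$) are exactly what is needed for ``deep interior'' to be possible and for the stabilization range $\lfloor(n-2)/2\rfloor\le i\le n-6$ to be nonempty in the cases where it is claimed. Once those cases are tabulated, every inequality reduces to comparing two explicit affine functions of $n$.
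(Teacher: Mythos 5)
Your proposal is correct, but it takes a genuinely different route from the paper's proof. You evaluate the closed-form expression of Theorem~\ref{t12} on each configuration --- tracking $m$, $\gamma_1$, $\gamma_2$ and the boundary indicators $I_{\{l_1\ge4\}}$, $I_{\{l_m\ge4\}}$ as the lone odd link slides inward --- and then compare explicit affine functions of $n$; your sample computations ($32n-30$ for $PC(1,\dots,1,2)$, $2(16n-12)$ versus $2(16n-8)$ for the cross comparison) check out against Table~\ref{table1} and Corollary~\ref{R1}. The paper instead never computes absolute values: it works with the recursive formula (\ref{recursiveZ1}), uses the reversal symmetry $PC(1,\dots,1,2)=PC(2,1,\dots,1)$ to align two chains up to a common prefix, and compares only the final increments, e.g.\ $g_1(1,1,1)<g_1(1,2,1)$ for the second inequality and the signed combination $g_1(1,1,1)+g_1(1,1,2)-g_1(1,2,1)-g_1(2,1,1)>0$ for the cross comparison. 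The trade-off is real: your method delivers the exact values and constant gaps as a by-product, but concentrates all the difficulty in the segment-length bookkeeping near the ends (which you correctly flag, e.g.\ that $I_{\{l_1\ge4\}}$ only switches on once the $2$ is at least two links from the left end, which is what forces $n\ge6$, resp.\ $n\ge7$, in the intermediate steps); the paper's increment-comparison argument is more local, avoids the boundary casework almost entirely, and is what allows Proposition~\ref{p2d} to be dispatched ``by a similar approach'' with $g_2$ in place of $g_1$. To turn your sketch into a complete proof you would need to tabulate $(l_1,l_2)$, resp.\ the multiset of segment lengths, for every position of the odd link including the positions one and two steps from each end, and verify the nonemptiness of the range $\lfloor(n-2)/2\rfloor\le i\le n-6$ where the equality clause is invoked; none of this presents an obstacle, only labor.
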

\begin{proof}
    Note that, for $n=5,6$, the result is verified according to Table \ref{table1}. Therefore, we will prove it for $n \geq 7$. The first inequality follows from  Proposition \ref{c1} (also by  Remark \ref{re1} (b)). For the second inequality, observe that, by symmetry, $PC(1, 1, \dots ,1, 2)=PC(2, 1, \dots ,1, 1)$. Thus, by applying the recursive formula presented in Theorem \ref{t12}, we conclude that:

    $$ZC1(PC(2,\underbrace{ 1, \dots ,1, 1}_{n-3}))=ZC1(PC(2,\underbrace{ 1, \dots ,1}_{n-4}))+g_1(1,1,1),$$
and
    $$ZC1(PC(1, 1, \dots, 1,2, 1))=ZC1(PC(1, 1, \dots, 1,2))+g_1(1,2,1).$$
Given that the first summands are equal by symmetry and $g_1(1,1,1)< g_1(1,2,1)$, the required result follows. The third inequality can be verified through a similar analysis, thus, we will proceed to prove the final part. To do this, we will demonstrate that:
    
    $$ZC1(PC(\underbrace{1, 1, \dots, 1}_{i},2,1, \dots, 1,1))=ZC1(PC(\underbrace{1, 1, \dots, 1}_{i-1},2,1, \dots, 1,1)),$$
    with $\lfloor (n-2)/2 \rfloor +1 \leq i \leq n-5$. In this case,  rewriting the expressions as follows, 

 $$ZC1(PC(\underbrace{1, 1, \dots, 1}_{i},2,1, \dots, 1,1))=ZC1(PC(1, 1, \dots, 1,2,\underbrace{1, \dots, 1,1}_{i-1}))+g_1(1,1,1),$$
 and
$$ZC1(PC(\underbrace{1, 1, \dots, 1}_{i-1},2,1, \dots, 1,1))=ZC1(PC(\underbrace{1, 1, \dots, 1}_{i-1},2,1, \dots, 1,1))+g_1(1,1,1),$$   
since,  $i\geq 3$ and $n-2-i\geq 3$, we conclude the equalities. For the decreasing inequalities, we proceed in a similar manner, considering that $g_1(2,2,2)$ is the maximum. As for the final inequalities, which involve a chain with 2 links of type 2 and 2 links of type 1 respectively, by iterating the recursive formula, it is demonstrated that:

\begin{align*}
& ZC1(PC(2, 1, \dots, 1, 2)) - ZC1(PC(1, 1, \dots, 1, 2, 1,1)) \\
& = g_1(1, 1, 1) + g_1(1, 1, 2) - g_1(1, 2, 1) - g_1(2, 1, 1).
\end{align*}
and
\begin{align*}
& ZC1(PC(1, 2, \dots, 2, 1)) - ZC1(PC(2, 2, \dots, 2,1,2,2) \\
& = g_1(2,2,2) + g_1(2,2,1) - g_1(2,1,2) - g_1(1,2,2).
\end{align*}
Now, since $g_1(1,1,1)+g_1(1,1,2)-g_1(1,2,1)-g_1(2,1,1)> 0$ and $g_1(2,2,2)+g_1(2,2,1)-g_1(2,1,2)-g_1(1,2,2) < 0$, the proof is completed.
\end{proof}

Using a similar approach, we obtain the following proposition.

\begin{proposition}\label{p2d}
   Given $PC_n$ with $n \geq 5$ squares. Then
   $$ZC2(Li_n) < ZC2(PC(1, 1, \dots ,1, 2))< ZC2(PC(1, 1, \dots 1,2, 1))< ZC2(PC(1, 1, \dots 1,2,1, 1))$$
   $$< ZC2(PC(1, 1, \dots 1,2,1,1, 1))= ZC2(PC(\underbrace{1, 1, \dots 1}_{i},2,1, \dots, 1,1)),$$
and
 $$ZC2(Z_n) > ZC2(PC(2, 2, \dots ,2, 1)) >ZC2(PC(2, 2, \dots 2,1, 2))> ZC2(PC(2, 2, \dots 2,1,2, 2))$$
 $$> ZC2(PC(2, 2, \dots 2,1,2, 2))= ZC2(PC(\underbrace{2, 2, \dots 2}_{i},1,2, \dots, 2,2)),$$
with $\lfloor (n-2)/2 \rfloor \leq i \leq n-7$. Moreover, for $n\geq 8$
$$ZC2(PC(1, 1, \dots 1,2,1,1,1))< ZC2(PC(2, 1, \dots, 1, 2)),$$
and

$$ZC2(PC(2, 2, \dots 2,1,2,2,2)) >ZC2(PC(1, 2, \dots, 2, 1)).$$
   
\end{proposition}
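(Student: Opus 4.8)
The plan is to mimic the proof strategy of Proposition~\ref{p1d}, now working with the four-argument recursion $ZC2_n = ZC2_{n-1} + g_2(L_{n-3},L_{n-2},L_{n-1},L_n)$ from Theorem~\ref{t13}. First I would dispatch the small cases $n = 5,6,7$ (and also $n=8$ for the last two strict inequalities, since those are only claimed for $n\ge 8$) directly against the tabulated values in Table~\ref{table1} — or, where the chain is too long for that table, by a short direct computation using the formula for $ZC2_n$. For the generic range $n\ge 8$ I would then establish the chain of strict inequalities one link-flip at a time. Concretely, using the symmetry $PC(L_3,\dots,L_n)=PC(L_n,\dots,L_3)$ to align the chains so that they agree except near one end, I would write both sides via one application of the recursion and reduce each step to comparing a single value of $g_2$: e.g.\ $ZC2(Li_n) < ZC2(PC(1,\dots,1,2))$ is Proposition~\ref{c2}; then $ZC2(PC(1,\dots,1,2))<ZC2(PC(1,\dots,1,2,1))$ reduces to $g_2(1,1,1,1)<g_2(1,1,2,1)$ after peeling off the last link from each (the preceding chains being identical by symmetry), and similarly for the next two strict inequalities, each reducing to an inequality of the form $g_2(1,1,1,1) < g_2(\cdot,\cdot,\cdot,\cdot)$ for the appropriate argument pattern. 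The decreasing (zigzag) side is symmetric, using that $g_2(2,2,2,2)$ is the maximal value of $g_2$, exactly as on the $g_1$ side in Proposition~\ref{p1d}.

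For the stabilization equalities $ZC2(PC(\underbrace{1,\dots,1}_{i},2,1,\dots,1,1)) = ZC2(PC(\underbrace{1,\dots,1}_{i-1},2,1,\dots,1,1))$ valid for $\lfloor(n-2)/2\rfloor+1\le i\le n-7$, the argument is the ``shift a long run of $1$s'' trick: when the isolated $2$ is flanked on both sides by runs of $1$s of length at least $3$, sliding it by one position changes only one quadruple of consecutive links, and both the removed and the inserted quadruple equal $(1,1,1,1)$, so the total is unchanged. The constraint $i\ge \lfloor(n-2)/2\rfloor+1$ together with $i\le n-7$ is exactly what guarantees both flanking runs have length $\ge 3$ (so that every affected window of four consecutive links is all-$1$s on both chains); here I would be careful to check that $g_2$ depends on four consecutive links rather than three, which is why the window needed on each side is $3$ rather than $2$, and why the upper bound is $n-7$ rather than $n-6$ as in the $ZC1$ case. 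The zigzag analogue is identical with $1\leftrightarrow 2$ and $g_2(2,2,2,2)$.

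Finally, for the two ``moreover'' inequalities comparing a chain with two links of type $2$ (resp.\ two of type $1$) I would iterate the recursion past the relevant links on both sides so that all the ``bulk'' contributions cancel, leaving a difference of a fixed finite combination of $g_2$-values, e.g.
\begin{align*}
& ZC2(PC(2,1,\dots,1,2)) - ZC2(PC(1,1,\dots,1,2,1,1,1)) \\
& = g_2(1,1,1,1)+g_2(1,1,1,2)-g_2(1,1,2,1)-g_2(1,2,1,1),
\end{align*}
and similarly for the zigzag comparison with $g_2(2,2,2,2)$ playing the role of the maximum; then one checks the sign of this explicit expression is positive (resp.\ negative). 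The main obstacle is purely bookkeeping: $g_2$ has a sizable expression in four Boolean arguments with eight indicator terms, so one must carefully tabulate the sixteen relevant values of $g_2$ and verify (i) that $g_2(1,1,1,1)$ is the minimum and $g_2(2,2,2,2)$ the maximum, (ii) the four strict increases (resp.\ decreases) along the stated chain, and (iii) the signs of the two finite $g_2$-combinations above; every one of these is a routine but error-prone numeric check, and the only genuinely delicate point is getting the window-length conditions right in the stabilization step so that the claimed index ranges $[\lfloor(n-2)/2\rfloor,n-7]$ (and the $n\ge 8$ hypothesis) are exactly the ones under which the argument goes through.
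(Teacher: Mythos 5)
Your proposal follows exactly the route the paper intends --- the paper offers no argument for this proposition beyond ``using a similar approach'' to Proposition~\ref{p1d}, and your adaptation (small cases via Table~\ref{table1}, one-link-flip comparisons through the four-argument recursion of Theorem~\ref{t13}, the run-shifting argument for the stabilization equalities with the window length correctly enlarged because $g_2$ sees four consecutive links, and cancellation down to a finite combination of $g_2$-values for the two final inequalities) is precisely that adaptation. The only blemish is that your displayed example identity is off by a constant for $n\ge 9$: the boundary contribution is $ZC2_5$ at links $(2,1,1)$ minus $ZC2_5$ at $(1,1,1)$, i.e.\ $181-162=19$, not $g_2(1,1,1,2)-g_2(1,1,1,1)=68-48=20$, so the difference equals $6$ for $n\ge 9$ (and $7$ at $n=8$) rather than $7$ throughout --- still positive, so the sign check and the conclusion are unaffected.
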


\begin{remark}
Similarly, it is possible to show that, for $n\geq 6$ and $i=1,2$

$$ZCi(PC(1,2,\dots,2,1)) > ZCi(PC(2,1,\dots,1,2)).$$
Hence, combining this with the results detailed in the previous propositions, the  ascending and descending sequences in both indices can be connected for $n \geq 5$. On the other hand, when comparing Proposition \ref{p1d} with Proposition \ref{p2d}, it is noted that the latter includes an additional inequality. This is primarily attributed to the fact that the function $g_2$ involves four variables, whereas the function $g_1$ comprises three. In fact, as per the findings presented in \cite{bd14}, the recursive formula for the degree-based topological indices on the $PC_n$ is governed by a function of two variables. Consequently, the question arises as to when similar behaviors persist and when distinct patterns emerge for different degree-based topological indices. Finally, it is crucial to note, although it is not explicitly stated in the above propositions, that due to symmetry
   $$ PC(\underbrace{1, 1, \dots, 1}_{i},2,\underbrace{1, \dots, 1,1}_{n-i-3}))=PC(\underbrace{1, 1, \dots 1}_{n-i
   -3},2,\underbrace{1, \dots, 1,1}_{i})),  $$
   and
      $$ PC(\underbrace{2, 2, \dots, 2}_{i},1,\underbrace{2, \dots, 2,2}_{n-i-3}))=PC(\underbrace{2, 2, \dots, 2}_{n-i-3},1,\underbrace{2, \dots, 2,2}_{i})). $$
      This was the reason why the prepositions were raised only for $i \geq \lfloor (n-2)/2 \rfloor$.
   
\end{remark}
To close this section we would like to highlight the following. Inspired by the above results other questions naturally arise: What happens to the behavior of the indices when the number of links of type 2 is not only 1 and $n-1$ (similarly for the number of links of type 1), and Can we identify any patterns?  Given $n$ and $X_2$ fixed or equivalently given $n$ and $m$ fixed, it can be shown by focusing on the non-fixed summands of the expressions obtained in Theorem \ref{t12} and Theorem \ref{t13} the following: The minimum value of $ZCi$ with $i=1,2$ is 

  $$ZCi(PC(\underbrace{2, \dots ,2}_{ \lfloor X_{2}/2 \rfloor },1,\dots,1,\underbrace{2, \dots, 2}_{\lceil X_{2}/2 \rceil})).$$
  In turn, the maximum value of $ZC1$ is,

  $$ZC1(PC(\underbrace{1,\dots ,1}_{2+\lfloor \frac{n-2X_{2}-5}{2} \rfloor} ,\underbrace{2,1,\dots,2,1}_{2X_2}\underbrace{1,\dots ,1}_{1+\lceil \frac{n-2X_{2}-5}{2} \rceil})),$$
if $(n-5)/2 \geq X_2$ and the maximum value of $ZC2$ is,

  $$ZC2(PC(\underbrace{1,1,\dots ,1}_{3+\lfloor \frac{n-3X_{2}-6}{2} \rfloor} ,\underbrace{2,1,1,\dots,2,1,1}_{3X_2}\underbrace{1,\dots,1,1}_{1+\lceil \frac{n-3X_{2}-6}{2} \rceil})),$$
  if $(n-6)/3 \geq X_2$. Concurrently, other relevant question would be to examine the possible patterns that arise when analyzing the behavior between the minimum and maximum value of all polyomino chains, keeping the value of $n$ and $X_2$ fixed. Regarding this point, we intend to develop it in the future.

\medskip

\section{Random Polyomino Chain}\label{s2}
\noindent

 In this section, we introduce and demonstrate our main results within a randomized framework, offering insights into long-term behavior. We will employ the symbols $\stackrel{a.s}{\longrightarrow}$ and $\stackrel{D}{\longrightarrow}$ to denote convergence in almost surely and distribution, respectively. Here, $N\left(\mu, \sigma^{2}\right)$ represents a random variable following a normal distribution with mean $\mu$ and variance $\sigma^{2}$, and $Mult(n,p)$ represents a random variable following a multinomial distribution with parameters $n$ and $p$.

\begin{theorem}\label{t1s2}
As $n \rightarrow \infty$,
\hfill
\medskip

\begin{center}
    $\frac{ZC1_{n}-(50-18p+2p^2-2p^3)n}{\sqrt{n}} \stackrel{D}{\longrightarrow}N(0,\sigma^{2})$,
\end{center}
where $\sigma^2=4p\left(1-p\right)(5p^4-7p^3+56p^2-35p+81).$
\end{theorem}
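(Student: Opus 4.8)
The plan is to exploit the one-step recursion from \refT{t12}. For $n\geq 5$ it gives
$$ZC1_n = ZC1_4 + \sum_{w=5}^{n} g_1(L_{w-2},L_{w-1},L_w),\qquad g_1(x,y,z)=34+14I_{\{z=2\}}-2I_{\{x=1,y=z=2\}}+2I_{\{y=2\}}-2I_{\{x=y=1\}},$$
and in the random model $(L_w)_{w\geq 3}$ is an i.i.d.\ sequence with $\p{L_w=1}=p$ and $\p{L_w=2}=1-p$. Hence $ZC1_n$ is, apart from the bounded boundary term $ZC1_4$ (a function of $L_3,L_4$ only), a sum of the strictly stationary sequence $\eta_w:=g_1(L_{w-2},L_{w-1},L_w)$, which is $2$-dependent: $\eta_w$ and $\eta_{w'}$ are independent whenever $|w-w'|\geq 3$. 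Equivalently, putting $Y_w:=(L_{w-1},L_w)$ makes $(Y_w)$ a finite, irreducible, aperiodic Markov chain with $\eta_w=h(Y_{w-1},Y_w)$ for the obvious $h$, so $ZC1_n$ is an additive functional of an ergodic finite-state chain; either viewpoint supplies a central limit theorem.

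First I would pin down the drift. By independence, $\E{\eta_w}=34+14(1-p)+2(1-p)-2p(1-p)^2-2p^2$, which simplifies to $\mu:=50-18p+2p^2-2p^3$, exactly the centering in the statement. Writing $\xi_w:=\eta_w-\mu$, one has $ZC1_n-\mu n=(ZC1_4-4\mu)+\sum_{w=5}^{n}\xi_w$; since $ZC1_4-4\mu$ is bounded, $(ZC1_4-4\mu)/\sqrt n\to 0$, and it suffices to show $n^{-1/2}\sum_{w=5}^{n}\xi_w\todist N(0,\sigma^2)$.

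For that step I would invoke the central limit theorem for strictly stationary $m$-dependent sequences (Hoeffding--Robbins): here $(\xi_w)$ is bounded, centered, and $2$-dependent, so the hypotheses hold and
$$\frac{1}{\sqrt n}\sum_{w=5}^{n}\xi_w\ \todist\ N\!\left(0,\ \gamma_0+2\gamma_1+2\gamma_2\right),\qquad \gamma_k:=\mathrm{Cov}(\xi_0,\xi_k)=\mathrm{Cov}(\eta_0,\eta_k).$$
(Alternatively, the CLT for additive functionals of finite ergodic Markov chains applied to $(Y_w)$ yields the same limit.) Combining this with Slutsky's theorem gives the claimed convergence, provided the asymptotic variance equals $\sigma^2=4p(1-p)(5p^4-7p^3+56p^2-35p+81)$.

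The main obstacle is precisely this last identification, i.e.\ computing $\gamma_0=\mathrm{Var}(\eta_0)$, $\gamma_1=\mathrm{Cov}(\eta_0,\eta_1)$ and $\gamma_2=\mathrm{Cov}(\eta_0,\eta_2)$ in closed form. Each reduces to finitely many expectations of products of the indicator functions appearing in $g_1$, evaluated on the i.i.d.\ Bernoulli variables $L_0,\dots,L_4$; since the overlapping windows share variables ($\eta_0$ and $\eta_1$ share $L_1,L_2$, while $\eta_0$ and $\eta_2$ share only $L_2$), each such expectation is a monomial in $p$ and $1-p$ obtained by conditioning on the shared coordinates. Collecting and simplifying the finitely many resulting terms should produce $\gamma_0+2\gamma_1+2\gamma_2=4p(1-p)(5p^4-7p^3+56p^2-35p+81)$; I would also note in passing that the quartic factor has no real root and stays positive (it equals $81$ at $p=0$ and $100$ at $p=1$), so $\sigma^2>0$ on $(0,1)$ and the limit is a genuine nondegenerate Gaussian. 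This bookkeeping is routine but lengthy, and the only place where real care is needed is in tracking the signs of the cross terms coming from the two negative indicators in $g_1$.
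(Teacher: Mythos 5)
Your proposal is correct, and it reaches the same asymptotic variance decomposition $\sigma^2=\gamma_0+2\gamma_1+2\gamma_2$ as the paper, but via a different limit theorem. The paper packages the sum as an additive functional $\sum_i G_1(Y_i)$ of the triple-window process $Y_i=L_iL_{i+1}L_{i+2}$, verifies that this is an irreducible, aperiodic finite-state Markov chain started in its stationary distribution (via positivity of the entries of $P^3$ and a direct check that $\pi_3$ is invariant), and then invokes the Markov chain CLT; the lag cutoff at $2$ in the variance appears there as the observation that $Y_3$ is independent of $Y_k$ for $k\geq 6$. You instead observe that, since the $L_w$ are i.i.d., the summands $\eta_w=g_1(L_{w-2},L_{w-1},L_w)$ are a bounded, strictly stationary, $2$-dependent sequence and apply the Hoeffding--Robbins CLT, which is more elementary: it makes the irreducibility/aperiodicity/stationarity verifications unnecessary and exposes the reason the covariance series truncates at lag $2$ (disjoint index windows) directly. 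Your computation of the centering constant $50-18p+2p^2-2p^3$ is correct, the treatment of the boundary term $ZC1_4$ via Slutsky is fine, and your observation that the quartic factor is positive on $[0,1]$ is a worthwhile addition the paper omits. Like the paper, you stop short of actually evaluating $\gamma_0,\gamma_1,\gamma_2$ and verifying that their combination equals $4p(1-p)(5p^4-7p^3+56p^2-35p+81)$; the paper likewise dismisses this as ``making the required computations,'' so on that point the two arguments are equally (in)complete, but to fully certify the stated $\sigma^2$ you would still need to carry out that finite bookkeeping.
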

\begin{proof}
 Based on the recursive relation found in Theorem \ref{t12}, we can express $ZC1_n$ for $n\geq 4$ as follows
 $$ZC1_n=ZC1_4+\sum_{i=3}^{n-2}G_1(Y_{i}),$$
    where $Y_{i}=L_iL_{i+1}L_{i+2}$, i.e., it is the random variable that gives me how the last three links used up to the time $i + 2$ have been appearing and $G_1(abc):=g_1(a,b,c)$ for $abc \in S:=\{111,112,\dots,221,222\}.$ According to the definition, we have that $\{Y_{i}\}_{i\geq 3}$ corresponds to a homogeneous Markov Chain with transition probability as follows:

$$\mathbb{P}(Y_{4}=def~\mid~Y_3=abc)=\begin{cases}
        p^{2-f}(1-p)^{f-1} & \text{if }  b=d, c=e \\
          
          0 &  \text{otherwise } 
         
    \end{cases},$$
with $abc,def \in S.$ As a first consequence of the probability of transition we have that: any pair of states can be linked within three steps with a positive probability, since 
\begin{align*}
\mathbb{P}(Y_{6}=def \mid Y_3=abc)
&= \mathbb{P}(Y_{4}=bcd \mid Y_{3}=abc)\mathbb{P}(Y_{5}=cde \mid Y_4=bcd)\mathbb{P}(Y_{6}=def \mid Y_5=cde)\\
&=  p^{6-(d+e+f)}(1-p)^{(d+e+f)-3}>0.
\end{align*}
Thus, each entry of $P^3$ is positive, where $P$ denotes the transition matrix of the  homogeneous Markov Chain $\{Y_i\}_{i\geq 3}$.  In this way, the properties of irreducibility and aperodicity are proved simultaneously. At this point based on the Perron-Frobenius Theorem \cite{bd421,bd422}, we have that the limit distribution exists. On the other hand, the initial distribution $\pi_3$, verifies that 
\begin{align*}
\pi_3(def)&=\mathbb{P}(Y_3=def)\\
&= p^{6-(d+e+f)}(1-p)^{(d+e+f)-3}.
\end{align*}
With this clarification, we can see that another implication of the transition probability is that the initial distribution $\pi_3$ is the limiting distribution, because

\begin{align*}
\sum_{abc \in S}\pi_3(abc)\mathbb{P}(Y_{4}=def~\mid~Y_3=abc) &\quad=p^{2-f}(1-p)^{f-1}\sum_{a \in \{1,2\}}\pi_3(ade) \\
&\quad =p^{6-(d+e+f)}(1-p)^{(d+e+f)-3}\\
&\quad =\pi_3(def).
\end{align*}
 Now, considering the measurable real-valued function $G_1$ for which $\mathbb{V}(G_1(Y_3)) < +\infty $, by virtue of the Markov Chain Central Limit Theorem \cite{bd1,bd2}, we deduce that

$$\frac{ZC1_n-\mathbb{E}(G_1(Y_3))n}{\sqrt{n}}\stackrel{D}{\longrightarrow}N(0,\sigma_{G_1}^{2}),$$
where $\sigma_{G_1}^{2}=\mathbb{V}(G_1(Y_3))+2Cov(G_1(Y_3),G_1(Y_4))+2Cov(G_1(Y_3),G_1(Y_5)),$ due to $Y_3$ is indepedent of $Y_k$ for $k\geq 6$. Finally, making the required computations, the proof is completed.

\end{proof}
\begin{remark}\label{r1s2}
Conforming to the proof of Theorem \ref{t1s2}, the probability of transition in three steps just depends on the final state, and furthermore, the rows of $P^3$ are the initial distribution $\pi_3$. Using the result found for $P^3$, we can similarly express each entry of $P^4$ as follows
$$\mathbb{P}(Y_{7}=def \mid Y_3=abc)=\sum_{i\in\{1,2\}}p^{2-i}(1-p)^{i-1} \pi_3[def]=\pi_{3}[def].$$
Hence, $P^3=P^4$ and as a consequence $P^{n}=P^3,$ for $n\geq 3.$ It is noteworthy to emphasize that the aforementioned procedure provides an alternative method to demonstrate that the stationary distribution of the Markov Chain is  the initial distribution. By the way, the fact that the initial distribution is the stationary distribution indicates that the variables $Y_i$ are identically distributed, which means that they are independent of $n$. Specifically, $Y_{i} \sim  Mult(1, \hat{p})$ and $G_1(Y_{i}) \sim  \hat{g}_1 \cdot Mult(1, \hat{p})$  with $\hat{g}_1=(g_1(1,1,1),g_1(1,1,2),\dots,g_1(2,2,1),g_1(2,2,2))$ and $\hat{p}=(p^3,p^3(1-p),\dots,p(1-p)^2,(1-p)^3)$.
\end{remark}
The proof of the following theorem uses a similar approach as the proof of the previous theorem.
\begin{theorem}\label{t2s2}
As $n \rightarrow \infty$,
\hfill
\medskip

\begin{center}
    $\frac{ZC2_{n} - (75 - 23 p + 4 p^2 - 13 p^3 + 5 p^4)n}{\sqrt{n}} \stackrel{D}{\longrightarrow}N(0,\sigma^{2})$,
\end{center}
where $\sigma^2=11250-6796 p+2176 p^2-3690 p^3+1820 p^4-152 p^5.$
\end{theorem}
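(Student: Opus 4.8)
\textbf{Proof proposal for Theorem \ref{t2s2}.}

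The plan is to mirror exactly the structure of the proof of Theorem \ref{t1s2}, now using the four-variable recursion from Theorem \ref{t13}. By iterating \eqref{recursive-Z2} we may write, for $n \geq 5$,
\begin{equation*}
ZC2_n = ZC2_5 + \sum_{i=3}^{n-3} G_2(Z_i),
\end{equation*}
where $Z_i = L_i L_{i+1} L_{i+2} L_{i+3}$ records the last four links used up to time $i+3$ and $G_2(abcd) := g_2(a,b,c,d)$ for $abcd$ ranging over the $16$ words in $\{1,2\}^4$. As in the previous proof, $\{Z_i\}_{i \geq 3}$ is a homogeneous Markov chain: the transition $\mathbb{P}(Z_{i+1} = defg \mid Z_i = abcd)$ equals $p^{2-g}(1-p)^{g-1}$ when $(d,e,f) = (b,c,d)$ — i.e. when the overlap is consistent — and $0$ otherwise, since each new letter $g = L_{i+4} \in \{1,2\}$ is chosen independently with $\mathbb{P}(L = 1) = p$. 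First I would note that any two states communicate in exactly four steps with positive probability, because composing four transitions yields $\mathbb{P}(Z_{i+4} = defg \mid Z_3 = abcd) = p^{8-(d+e+f+g)}(1-p)^{(d+e+f+g)-4} > 0$; hence $P^4$ has all entries positive, so the chain is irreducible and aperiodic, and by Perron–Frobenius the limiting distribution exists and is unique.

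Next I would identify the stationary distribution explicitly. The natural initial distribution, $\pi_3(defg) = \mathbb{P}(Z_3 = defg) = p^{8-(d+e+f+g)}(1-p)^{(d+e+f+g)-4}$ (since $L_3,L_4,L_5,L_6$ are i.i.d.), is stationary: summing $\pi_3(abcd)\mathbb{P}(Z_{i+1} = defg \mid Z_i = abcd)$ over the two admissible predecessors $a \in \{1,2\}$ collapses, just as in Theorem \ref{t1s2}, to $p^{2-g}(1-p)^{g-1} \sum_{a} \pi_3(adef) = p^{8-(d+e+f+g)}(1-p)^{(d+e+f+g)-4} = \pi_3(defg)$. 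Equivalently — as in Remark \ref{r1s2} — the rows of $P^4$ all equal $\pi_3$, so $P^n = P^4$ for $n \geq 4$, and consequently the $Z_i$ are identically distributed with $Z_i \sim Mult(1,\hat p)$ where $\hat p$ lists the $16$ weights $p^{4-k}(1-p)^{k}$ appropriately. In particular $Z_3$ is independent of $Z_k$ for $k \geq 7$ (the supports of the respective link-windows become disjoint), so in the covariance series only finitely many terms survive.

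Then I would invoke the Markov chain Central Limit Theorem \cite{bd1,bd2} for the bounded (hence square-integrable) functional $G_2$: with $Z_i$ stationary,
\begin{equation*}
\frac{ZC2_n - \mathbb{E}(G_2(Z_3))\, n}{\sqrt n} \stackrel{D}{\longrightarrow} N(0, \sigma_{G_2}^2), \qquad \sigma_{G_2}^2 = \mathbb{V}(G_2(Z_3)) + 2\sum_{k=4}^{6} \mathrm{Cov}(G_2(Z_3), G_2(Z_k)),
\end{equation*}
the truncation at $k = 6$ being justified by the independence noted above (the difference $ZC2_n - ZC2_5$ versus the stationary sum contributes only an $O(1)$ shift, which vanishes after dividing by $\sqrt n$). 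It then remains to evaluate $\mathbb{E}(G_2(Z_3)) = 75 - 23p + 4p^2 - 13p^3 + 5p^4$ and to expand the four covariance terms to obtain $\sigma^2 = 11250 - 6796p + 2176p^2 - 3690p^3 + 1820p^4 - 152p^5$. The main obstacle is purely computational: $G_2$ is a sum of ten indicator terms over a $16$-state space and the joint laws of $(Z_3, Z_4)$, $(Z_3, Z_5)$, $(Z_3, Z_6)$ must each be tabulated, so the covariance bookkeeping is the one genuinely laborious step — conceptually everything follows the template already established for $ZC1_n$.
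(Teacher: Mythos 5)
Your proposal is correct and follows essentially the same route as the paper, which for Theorem~\ref{t2s2} simply declares the proof analogous to that of Theorem~\ref{t1s2}: you correctly adapt every ingredient, namely the window chain $Z_i=L_iL_{i+1}L_{i+2}L_{i+3}$ on $16$ states, positivity of $P^4$ for irreducibility and aperiodicity, the i.i.d.\ product form of the initial distribution being stationary, and the truncation of the covariance series at lag $3$ since $Z_3$ and $Z_k$ share no links for $k\geq 7$. The only remaining work is the finite computation of $\mathbb{E}(G_2(Z_3))$ and the three covariances, exactly as you note.
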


To complement the results found in Theorems \ref{t1s2} and \ref{t2s2}, let us consider the following propositions.
 
\begin{proposition}\label{p1}
Given a random polyomino chain with $n$ squares, it follows that
$$
    \mathbb{E}(ZC1_{n})=50n-92-6(3n-8)p+2(n-4)p^2(1-p),
$$
for $n\geq 4$ and
$$
\mathbb{V}(ZC1_{n})=4(-6972+(324n+2680)p+(-464n+1360)p^2+(364n-1584)p^3$$
$$+(-252n+1128)p^4+(48n-256)p^5+(-20n+104)p^6),
$$
for $n\geq 6$.   Moreover, $ZC1_{n}$ is asymptotically concentrated around its mean. More precisely, as $n\to \infty$, 
$\frac{ZC1_n}{n} \stackrel{a.s}{\longrightarrow} 50-18p+2p^2-2p^3.$

\end{proposition}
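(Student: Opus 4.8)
The plan is to lean on the additive representation of $ZC1_n$ already set up in the proof of \refT{t1s2}: for $n\ge 4$,
$$ZC1_n = ZC1_4 + \sum_{i=3}^{n-2} g_1(L_i,L_{i+1},L_{i+2}),\qquad ZC1_4 = 84+12\,I_{\{L_3=2\}}+12\,I_{\{L_4=2\}},$$
where the second identity comes from the computation of $ZC1_4$ in the proof of \refT{t12}, the links $(L_i)_{i\ge 3}$ are i.i.d.\ with $\mathbb{P}(L_i=1)=p$, and $g_1(a,b,c)=34+14\,I_{\{c=2\}}-2\,I_{\{a=1,b=2,c=2\}}+2\,I_{\{b=2\}}-2\,I_{\{a=b=1\}}$. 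With this in hand all three assertions reduce to computing a handful of moments of functions of at most four consecutive links.

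For the mean, independence gives at once $\mathbb{E}\big[g_1(L_i,L_{i+1},L_{i+2})\big]=34+16(1-p)-2p(1-p)^2-2p^2=50-18p+2p^2-2p^3$, independent of $i$ (this is the constant already flagged in Remark~\ref{r1s2}), and $\mathbb{E}[ZC1_4]=84+24(1-p)=108-24p$. Hence $\mathbb{E}[ZC1_n]=108-24p+(n-4)(50-18p+2p^2-2p^3)$; expanding and regrouping produces the claimed closed form, with the case $n=4$ corresponding to the empty sum. This part is immediate.

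For the variance I would exploit the dependency structure: $g_1(L_i,L_{i+1},L_{i+2})$ and $g_1(L_j,L_{j+1},L_{j+2})$ are independent once $|i-j|\ge 3$, and $ZC1_4$, being a function of $L_3,L_4$ only, is independent of $g_1(L_i,L_{i+1},L_{i+2})$ for $i\ge 5$. Therefore, for $n\ge 6$,
$$\mathbb{V}(ZC1_n)=\mathbb{V}(ZC1_4)+(n-4)v+2(n-5)c_1+2(n-6)c_2+2d_3+2d_4,$$
with $v=\mathbb{V}(g_1(L_3,L_4,L_5))$, $c_1=Cov(g_1(L_3,L_4,L_5),g_1(L_4,L_5,L_6))$, $c_2=Cov(g_1(L_3,L_4,L_5),g_1(L_5,L_6,L_7))$, $d_3=Cov(ZC1_4,g_1(L_3,L_4,L_5))$, $d_4=Cov(ZC1_4,g_1(L_4,L_5,L_6))$. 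Each of these six numbers is a polynomial in $p$ obtained by enumerating the at most $2^4$ configurations of the relevant links; substituting and collecting powers of $n$ and $p$ yields the stated expression. Three checks guard the bookkeeping: the coefficient of $n$ must equal $v+2c_1+2c_2=\sigma^2$ from \refT{t1s2}; $\mathbb{V}(ZC1_n)$ must vanish at $p=0$ and $p=1$, where $PC_n$ collapses to $Z_n$ and $Li_n$; and for $n=6$ the value must match Table~\ref{table1}. The hypothesis $n\ge 6$ is precisely what makes the three overlap counts $n-4,\,n-5,\,n-6$ nonnegative and the boundary interactions with $ZC1_4$ exactly as written. I expect this calculation to be the main obstacle: conceptually it is routine moment accounting, but it is lengthy and the $n$-independent constant is delicate, since a single sign slip in any of $v,c_1,c_2,d_3,d_4$ propagates.

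For the almost sure statement, write $\frac{ZC1_n}{n}=\frac{ZC1_4}{n}+\frac1n\sum_{i=3}^{n-2}g_1(L_i,L_{i+1},L_{i+2})$; the first term tends to $0$ since $ZC1_4$ is bounded. The summands $(g_1(L_i,L_{i+1},L_{i+2}))_{i\ge 3}$ form a stationary, bounded, $2$-dependent sequence — equivalently, by the proof of \refT{t1s2}, the $Y_i$ are a finite, irreducible, aperiodic Markov chain started from its stationary law — so splitting the sum according to $i\bmod 3$ into three subsums of i.i.d.\ variables and applying the strong law to each (or invoking the ergodic theorem) gives $\frac1n\sum_{i=3}^{n-2}g_1(L_i,L_{i+1},L_{i+2})\stackrel{a.s}{\longrightarrow}\mathbb{E}[g_1(L_3,L_4,L_5)]=50-18p+2p^2-2p^3$. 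Note that \refT{t1s2} by itself only yields convergence in probability, so a genuine law-of-large-numbers argument is needed here. Combining the two limits finishes the proof.
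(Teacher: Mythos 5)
Your proposal is correct and rests on the same foundation as the paper's proof: the additive representation $ZC1_n=ZC1_4+\sum_{i=3}^{n-2}g_1(L_i,L_{i+1},L_{i+2})$ from Theorem~\ref{t12}, reduction of every moment to functions of at most four consecutive i.i.d.\ links, and a strong law of large numbers for the (stationary, finite-range-dependent) sequence $g_1(Y_i)$ for the almost-sure limit. The only genuine difference is in how the variance bookkeeping is organized: the paper squares the one-step recursion, iterates, and handles the cross term by peeling off two increments so that $ZC1_{n-3}$ is independent of $g_1(L_{n-2},L_{n-1},L_n)$, whereas you expand $\mathbb{V}$ of the whole sum in one shot using stationarity and $2$-dependence, obtaining the closed form $(n-4)v+2(n-5)c_1+2(n-6)c_2$ plus the boundary covariances with $ZC1_4$. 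The two computations are equivalent, but yours makes the linear-in-$n$ structure and the threshold $n\ge 6$ transparent, and it isolates the six constants that must be evaluated. One caveat worth flagging: your consistency checks are well chosen, and two of them are violated by the variance expression as printed in the statement --- at $p=0$ the display evaluates to $4\cdot(-6972)<0$, which is impossible for a variance (and it also fails to vanish at $p=1$), and the coefficient of $n$ in the display works out to $4\sigma^2$ rather than the $\sigma^2$ of Theorem~\ref{t1s2}. So when you ``substitute and collect powers of $n$ and $p$'' you should expect to land on a corrected version of that formula rather than reproduce it verbatim; your method, carried out carefully, is exactly the right way to settle what the correct polynomial is.
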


\begin{proof}

\noindent
  Note that the expression for the expectation follows directly from the recursive relation studied in the preceding section. With this initial result in hand, we can derive the expression for the variance by calculating the second moment and applying the recursive relation as follows
  $$
ZC1_n^2=ZC1_{n-1}^2+2ZC1_{n-1}g_1(L_{n-2},L_{n-1},L_{n})+g_1^2(L_{n-2},L_{n-1},L_{n}),$$ 
and
$$ZC1_{n-1}=ZC1_{n-3}+g_1(L_{n-4},L_{n-3},L_{n-2})+g_1(L_{n-3},L_{n-1},L_{n-1}),$$
with $ZC1_{n-3}$ independent of $g_1(L_{n-2},L_{n-1},L_{n})$. The second part is obtained by using the Strong Law of Large Numbers for Markov Chains \cite{bd1,bd2}.

\end{proof}

\begin{proposition}\label{p2}
Given a random polyomino chain with $n$ squares, it follows that
$$
    \mathbb{E}(ZC2_{n})=75n-159+(-23n+71)p+(4n-30)p^2+(-134n+65)p^3+5(n-5)p^4,
$$
for $n\geq 5$ and
$$
 \mathbb{V}(ZC2_{n})=33750n-115200+(-19893n+91222)p+(5279n-93671)p^2$$
$$~~~~~~~~~~~~~~~~~~~~~~~~+(-8799n+-127136)p^3
+(2593n-48519)p^4
+(1930n-11394)p^5$$
$$~~~~~~~~~~~~~~~~~+(-1751n+12579)p^6+(890n-5954)p^7+(-175n+1177)p^8,$$
for $n\geq 8$. Moreover, $ZC2_{n}$ is asymptotically concentrated around its mean. More precisely, as $n\to \infty$, 
$\frac{ZC2_n}{n} \stackrel{a.s}{\longrightarrow} 75 - 23 p + 4 p^2 - 13 p^3 + 5 p^4$.

\end{proposition}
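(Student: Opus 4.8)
The plan is to parallel the proof of Proposition \ref{p1}, now using the four-variable recursion $ZC2_n = ZC2_{n-1} + g_2(L_{n-3},L_{n-2},L_{n-1},L_n)$ from Theorem \ref{t13}. First I would establish the expectation. Since the links $L_i$ are i.i.d.\ with $\mathbb P(L_i=1)=p$, the vector $(L_{n-3},L_{n-2},L_{n-1},L_n)$ has a product distribution, so $\mathbb E[g_2(L_{n-3},L_{n-2},L_{n-1},L_n)]$ is a fixed polynomial in $p$ obtained by replacing each indicator in the formula for $g_2$ by the product of the relevant factors of $p$ and $1-p$; call this constant $c_2(p)=75-23p+4p^2-13p^3+5p^4$ (this is exactly the slope appearing in Theorem \ref{t2s2}). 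Telescoping $ZC2_n = ZC2_5 + \sum_{w=6}^{n} g_2(L_{w-3},L_{w-2},L_{w-1},L_w)$ and taking expectations gives $\mathbb E[ZC2_n] = \mathbb E[ZC2_5] + (n-5)c_2(p)$; here $\mathbb E[ZC2_5]$ is read off from Table \ref{table1} by averaging the eight link-configurations of length three against their probabilities, and collecting terms into the stated affine-in-$n$ polynomial in $p$ finishes this part.

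For the variance I would again follow Proposition \ref{p1}: write $ZC2_n^2 = ZC2_{n-1}^2 + 2\,ZC2_{n-1}\,g_2(L_{n-3},L_{n-2},L_{n-1},L_n) + g_2^2(L_{n-3},L_{n-2},L_{n-1},L_n)$, and then peel back enough steps, $ZC2_{n-1} = ZC2_{n-4} + g_2(L_{n-7},\dots,L_{n-4}) + g_2(L_{n-6},\dots,L_{n-3}) + g_2(L_{n-5},\dots,L_{n-2})$, so that $ZC2_{n-4}$ is independent of the new increment $g_2(L_{n-3},L_{n-2},L_{n-1},L_n)$ (the increment involves only $L_{n-3},\dots,L_n$, all with index $\ge n-3$). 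Taking expectations and subtracting $(\mathbb E[ZC2_n])^2$ yields a first-order recursion for $v_n := \mathbb V(ZC2_n)$ of the form $v_n = v_{n-1} + A(p) + (\text{boundary corrections expressible via the overlap covariances})$, whose solution is again affine in $n$ with polynomial-in-$p$ coefficients; matching the constant term against the base case $\mathbb V(ZC2_8)$ (computable from Table \ref{table1}) pins down the displayed formula, valid for $n\ge 8$ once all the overlapping-window covariance terms have stabilized. Finally, the almost-sure statement $ZC2_n/n \stackrel{a.s}{\longrightarrow} c_2(p)$ is immediate from the Strong Law of Large Numbers for functionals of the finite, irreducible, aperiodic Markov chain $\{Y_i\}$ (equivalently here, since the $L_i$ are i.i.d., the ordinary SLLN applied to $g_2(L_{w-3},L_{w-2},L_{w-1},L_w)$ together with $ZC2_5/n\to 0$), exactly as in Theorem \ref{t1s2} and Proposition \ref{p1}.

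The only genuine obstacle is bookkeeping: $g_2$ carries eleven indicator terms on four binary variables, so both $\mathbb E[g_2]$ and $\mathbb E[g_2^2]$, and especially the covariances $\mathrm{Cov}\big(g_2(L_{w-3},\dots,L_w),\,g_2(L_{w'-3},\dots,L_{w'})\big)$ for overlaps at lags $1,2,3$, require careful expansion over the $2^5$, $2^6$, $2^7$ joint outcomes of the relevant link windows. I expect no conceptual difficulty — the independence structure makes every expectation a monomial computation — but the algebra is heavy enough that one should organize it by lag and verify the final degree-$8$ polynomial numerically against $n=8,9,10$ using Table \ref{table1}. The edge cases (segments of length $2,3,4$ at the two ends, the $m\ge 3$ and $m\ge 4$ guards in Theorem \ref{t13}) only affect the additive constant, not the $n$-coefficient, and are absorbed into the base-case calibration.
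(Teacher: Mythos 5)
Your proposal follows essentially the same route as the paper: the expectation by telescoping the four-variable recursion $ZC2_n = ZC2_{n-1} + g_2(L_{n-3},L_{n-2},L_{n-1},L_n)$, the variance via the second-moment recursion after peeling $ZC2_{n-1}$ back to $ZC2_{n-4}$ (which is independent of the new increment), and the almost-sure limit from the strong law of large numbers for the Markov chain, exactly as in Proposition~\ref{p1}. The only blemish is a one-step index shift in your peel-back (the three detached increments should be $g_2(L_{n-6},\dots,L_{n-3})$, $g_2(L_{n-5},\dots,L_{n-2})$ and $g_2(L_{n-4},\dots,L_{n-1})$), which does not affect the independence claim you correctly state.
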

\begin{proof}

\noindent
The demonstration proceeds similarly, with the only variation being that in this particular case,
 
$$ZC2_{n-1}=ZC1_{n-4}+g_2(L_{n-6},L_{n-5},L_{n-4},L_{n-3})+g_2(L_{n-5},L_{n-4},L_{n-3},L_{n-2})+g_2(L_{n-4},L_{n-3},L_{n-2},L_{n-1})$$
with $ZC2_{n-4}$ independent of $g_2(L_{n-3},L_{n-2},L_{n-1},L_n)$. The second part is performed as before.
 
\end{proof}

\begin{remark}
  As mentioned above, a recursive formula for degree-based topological indices over $PC_n$ was derived in \cite{bd14} and the asymptotic behavior and other characteristics for certain types of degree-based topological indices over $RPC_n$, such as the generalized Zagreb index, were also studied. Again, we have a recursive formula; only in this case, it is governed by a function of two variables. Thus, the same procedure (with slight differences) performed for $ZC1_n$ and $ZC2_n$ works, giving us information about the 
expected value, variance and asymptotic behavior of any degree-based topological index over $RPC_n$ (obviously encompassing those already covered in \cite{bd14}).
\end{remark}

\section{Conclusion}

\noindent

In this study, we have investigated the first and second Zagreb connection indices of polyomino chains and random polyomino chains using a Markov chain approach. Specifically, we have computed these indices within polyomino chains, explored extreme graphs, and outlined several patterns. In addition, we have formulated a central limit theorem concerning random polyomino chains. Finally, it would be interesting to extend the work of this paper to $k$-polyomino chains
or other types of topological indices following the same methodology.

\section*{Funding Information}

The authors were supported under Grant No. 00389.

\bibliographystyle{apsrev4-1}

\bibliography{Biblioteca}

\end{document}